\newcolumntype{d}[1]{D{.}{.}{#1}}
\newtheorem{theorem}{Theorem}
\newtheorem{proposition}[theorem]{Proposition}
\newcommand{\Figtext}[1]{%
	\begin{tablenotes}[para,online, flushleft]
		\footnotesize
		\hspace{-0.25cm}
		#1
	\end{tablenotes}
}
\newcommand{\Fignote}[1]{\Figtext{~#1}}
\begin{document}

\title{Can political gridlock undermine checks and balances? A lab experiment.} 	

	\author{Alvaro Forteza \thanks{%
			Departamento de Econom\'{\i}a, FCS-UDELAR, Uruguay.
			Alvaro.forteza@cienciassociales.edu.uy.} \ 
		and Irene Mussio \thanks{Newcastle University Business School, Irene.mussio@cienciassociales.edu.uy, Departamento de Econom\'{\i}a, FCS-UDELAR, Uruguay } \
					and Juan S. Pereyra \thanks{Department of Economics, Universidad de Montevideo, Uruguay.
			jspereyra@um.edu.uy.}
		\thanks{We are grateful with Marina Agranov, Juan Camilo Cárdenas, Nathan Canen, María Paz Espinosa, Germán Feierherd, Leopoldo Fergusson, Santiago Lopez Cariboni, Daniel Monte, and Graciela Sanroman for their insightful comments and suggestions. We are also in debt with participants at the Central Bank of Uruguay, Department of Economics - PUC Chile, the Latin American Network in Economic History and Political Economy (LANE HOPE), and at the Seminario Anual - dECON.
			Financial support from ANII, FCE\_1\_2017\_1\_135851, is gratefully acknowledged. This research was approved by the Ethics Committee of the Faculty of Medicine of Universidad de la República and the Ethics Committee of McMaster University (3699). It was pre-registered at the Wharton Credibility Lab (\href{https://aspredicted.org/blind.php?x=LUM_BWT}{\#26622}). The codes and database are available at \href{https://github.com/alforteza/weakening-checks-and-balances/tree/main}{Git Hub repository}. Luciana Cantera, Manuel Adler and Romina Quagliotti provided excellent assistance in the experimental sessions. The usual disclaimer applies.}}
	\date{\monthyeardate\today}
	\maketitle
	
	\baselineskip18pt
	
	\thispagestyle{empty}
	
	\setcounter{page}{0}

	\begin{center}
		\textbf{Abstract}
	\end{center}

If checks and balances are aimed at protecting citizens from the government's abuse of power, why do they sometimes weaken them? We address this question in a laboratory experiment in which subjects choose between two decision rules: with and without checks and balances. Voters may prefer an unchecked executive if that enables a reform that, otherwise, is blocked by the legislature. Consistent with our predictions, we find that subjects are more likely to weaken checks and balances when there is political gridlock. However, subjects weaken the controls not only when the reform is beneficial but also when it is harmful. 

	\textbf{Keywords: } Political agency, separation of powers, checks and balances, lab experiment.

	\textbf{JEL Codes:} D72, E690, P160
	
	\pagebreak

\begin{flushright}\textit{
		``In every region of the world,\\changing times have boosted public demand\\ for more muscular, assertive leadership. (...) \\ We’re now in the strongman era.'' \\ Ian Bremmer TIME, May 2018.}
\end{flushright}
\medskip

	\section{Introduction}
	
Episodes of weakening of checks and balances with citizens support represent a puzzle and a challenge for political economy. Although the general trend in last decades has been an increase in checks and balances, some developing democracies have experienced reverse trends \parencite{besley2011pillars}. This was the case of several Latin American countries during the 1990s and 2000s, Recep Erdo\u{g}an in Turkey, Viktor Orbán in Hungary, and Vladimir Putin in Russia \parencite{Acemoglu2013c, Forteza2019}. If checks and balances are aimed at protecting citizens from governments abuse of power, why do citizens sometimes decide to remove them? In this paper, we build a model that explains citizens' willingness to undermine controls on the executive, and present the results of a laboratory experiment to provide supportive evidence to our hypotheses.

We explore the idea that voters may support the loosening of checks and balances as a reaction to \textbf{political gridlock}. That is, to a situation where the executive proposes a reform and the legislature proposes to keep the status quo. In this environment, the basic trade-off voters face is between effective control of the executive and reform. Loosening controls facilitates the reform proposed by the executive but at the cost of increased corruption. Keeping the controls puts a break on corruption but at the cost of no reform.

We extend this argument with a model where voters have to decide between two institutional settings: \textbf{checks and balances} ($CB$), and \textbf{special powers} ($SP$). All voters have the same preferences over policies: they prefer the one that matches the state of nature. However, they do not observe the state of nature: they only know the probability with which a reform is beneficial. The executive and the legislature observe the state of nature and then propose and commit to a policy.\footnote{We assume politicians credibly propose policies before voters decide over special powers. This timing is meant to capture real life episodes in which rulers demanded special powers to advance a reform program. In these episodes there was little doubt that the announced policies would be implemented, so there were no commitment issues. We represent this commitment ability assuming it stems from politicians types: some politicians prefer the status quo, some the reform and some the matching of the policy with the state of nature.} Each politician can be conservative (always proposes the status quo), reformist (always proposes the reform), or unbiased (proposes the policy to match the state of nature). Voters observe the type of the politicians in each branch and their proposals, and decide between $CB$ and $SP$. Under $CB$, a reform is implemented if, and only if, both branches agree, otherwise the status quo remains. With $SP$, the policy proposed by the executive is always implemented. Additionally, the extraction of rents by the executive, which entails a cost for voters, takes place only under $SP$.

With the model in hand, we ask the question: does political gridlock increase the probability of $SP$? 
The answer depends on the nature of the political gridlock. On the one hand, political gridlock \textbf{raises} the probability of $SP$ if either (i) the executive is reformist, the legislature is conservative and the reform is ex ante beneficial ---this is our hypothesis 1 (H1)---, or (ii) the executive is unbiased and proposes reform, and the legislature is conservative (H2). On the other hand, political gridlock \textbf{reduces} the probability of $SP$ if either (i) the executive is reformist, the legislature is conservative and the reform is ex ante harmful (H3), or (ii) the executive is reformist and the legislature is unbiased and proposes the status quo policy (H4).

Existing literature provides limited empirical evidence that political gridlock causes special powers. Using case studies, \textcite{Forteza2019} argue that several Latin American presidents weakened $CB$ with strong popular support alleging that there was a political gridlock that impeded reform. The president of Peru, Alberto Fujimori, led an auto-coup in 1992. He argued that  ‘the lack of identiﬁcation of some fundamental institutions with the national interests, like the legislature and the judiciary, blocks the actions of the government oriented to national reconstruction and development’. Public opinion polls document the strong support he received (71\% of respondents supported the closing of the legislature). In Argentina in the 90s, a very popular president, Carlos Menem, weakened the judiciary with similar arguments.\footnote{According to Menem's minister of justice, ‘An administration . . . cannot govern with a judiciary whose views are antagonistic to those of the government. If the Court were to have a vision completely diﬀerent from ours and to declare our laws unconstitutional, we could not implement our political and economic plans’ \parencite{chavez2004rule}.} Hugo Chávez, president of Venezuela between 1999 and 2013, had an ambitious program of reforms. With strong popular support, he strengthened the executive power vis-à-vis the legislature and the judiciary. The weakening of $CB$ was instrumental to advance his radical program. Rafael Correa, president of Ecuador between 2007 and 2017, also alleged that the strengthening of executive powers was necessary to advance reforms. Evo Morales, president of Bolivia between 2006 and 2019, also enjoyed strong popular support. In 2009 he obtained 61\% of popular votes in a plebiscite to reform the Constitution. The reform included a significant strengthening of the powers of the Executive. These case studies are illustrative, but do not provide causal evidence. We thus appeal to a laboratory experiment.\footnote{In all of these cases, public opinion support was instrumental for the executive to obtain special powers. There have been, however, other cases where citizens were divided in support of the reforms (as is currently the case with the judicial reform in Israel). We are not considering these cases as we want to understand why citizens support the dismantling of checks and balances.}

As voters in the model, participants of the experiment do \textbf{not} observe the state of nature, they only know the probability with which the reform is beneficial. They make decisions in 14 scenarios, which correspond to all possible combinations of the executive and legislature types and proposals. In each of them, subjects are informed of the policy proposals and politicians' types. They have to (i) determine whether there is political gridlock and the policy implemented with each rule, and (ii) choose between $CB$ and $SP$. We are especially interested in a scenario where the executive is reformist and the legislature is conservative. In this case, a political gridlock arises, and subjects are only aware that the reform is beneficial with a certain probability. Under checks and balances the executive cannot extract rents. In this regard, subjects are protected. However, this decision rule does not lead to any policy change. If subjects wish to implement a different policy, they must incur the costs associated with special powers. This is how we interpret the experiment in terms of our model.\footnote{In all but one of the treatments, the framing is neutral: the executive and the legislature were introduced as two decision-makers, checks and balances as rule 1, special powers as rule 2, and rents were explained as the cost of adopting rule 2. Nevertheless, we observe similar results in the treatment where we frame everything in terms of political decisions in a presidential system.}

We find evidence supporting some but not all of our hypotheses. \textbf{Subjects are more likely to support the weakening of CB with than without political gridlock}. This occurs in the four scenarios of political gridlock, which grants support to H1 and H2 but not to H3 and H4. As expected, we find that political gridlock caused by biased politicians when the reform was ex ante beneficial raises the probability of $SP$ (H1). Also, political gridlock with an unbiased executive raises the frequency of $SP$ (H2). However, contradicting H3, subjects also grant $SP$ in a higher proportion when faced with gridlock caused by biased politicians when the reform is ex ante harmful. Also, subjects grant $SP$ more frequently when faced with gridlock where the legislature is unbiased and truthfully ``warns'' that the reform is not beneficial, contradicting H4. 

Overall, the results of the experiment show an excess of support for $SP$. Participants undermine the controls on the executive even when this entails a loss in terms of their payoffs. We explore some possible factors driving this result. First, we explore if the excess of $SP$ was driven by individuals who made more mistakes. We do not find statistically significant differences in the impact of the political gridlock associated with harmful reforms on the frequency of $SP$, chosen by subjects who did and did not make mistakes.\footnote{In our setting, reforms can be ex-ante and ex-post harmful. Reforms are ex-ante harmful when both the executive and the legislature are biased and the probability that the reform is beneficial is low (this case corresponds to our hypothesis H3). Reforms are ex-post harmful when the executive is biased, the legislature is unbiased and warns that the reform is not beneficial (this case corresponds to our hypothesis H4).} Then, we replicate the analysis for the different variables and framing treatments: risk aversion, gender, political affiliation, support for strong leaders, and corruption and political framing. We do not find statistically significant differences in the excess of support for $SP$ for these variables. However, as we discuss below, there could be power issues regarding these additional tests. 

Finally, a motive for concern could be that our experiment induces the excess of SP by design. In each of the 14 scenarios where participants take decisions, we ask them if there is political gridlock. So we deliberately induced subjects to focus on it. However, we do it in a very mild way, much milder than what citizens in the real world are usually subject to. Indeed, executives suffering the blocking of their program by a legislature typically explicitly ask for special powers, and forcefully argue that the opposition does not allow them to do what has to be done \parencite[see][for some vivid narratives of this type of claims in the case of several strongmen in Latin America]{Forteza2019, Acemoglu2013c}. In this regard, while our experiment does not directly explore the effects of rulers' speeches on subjects willingness to support the weakening of checks and balances, it does show that subjects can easily be induced to do it by simply prompting them to think about political gridlocks.  

Our paper is related to several strands of the literature. Citizens in our model and subjects in our experiment face a basic trade-off between political control and governmental delegation that is at the center of the extensive literature on separation of powers and checks and balances \parencite[see, among many others,][]{Persson1997, Persson2000, Stephenson2010, Buisseret2016, Fox2010, ODonnell1998, Besley2018}.\footnote{Classic writers also emphasized the role of checks and balances as a protection  of minorities from ``the tyranny of the majority'' \parencite{Montesquieu1748, Hamilton2009, Locke1689}. For recent papers that also discuss this theme see \textcite{Aghion2004, Hayek1960, Buchanan1962, Buchanan1975a, Maskin2004}.} 
The main potential drawback of checks and balances in our environment is the impossibility of policy change caused by the legislative veto on executive proposals. In this regard, our paper also builds on the literature on ``veto players'' initiated by \citeauthor{Tsebelis1995} \parencite[see][among many others]{Tsebelis1995, Aghion2004, Tommasi2014}. 

Our paper is also closely related to the recent literature aimed at explaining the weakening of democracy \parencite{Svolik2020} and, more specifically, the weakening of checks and balances with voters support \parencite{Acemoglu2013c, Forteza2019, Ryvkin2017}. \citeauthor{Acemoglu2013c} argue that the poor majority supports the dismantling of CB because politicians are less tempted to accept bribes from the rich elites if they can extract rents than if they cannot. Therefore, their story explains episodes in which voters support the weakening of CB under executives that favor the poor. But, in our view, it does not explain so well why equally popular politicians holding a pro-market agenda got the support of citizens to undermine these controls when the reform agenda was firmly supported by the elites. Our model can explain the loosening of CB that took place during both pro-market and anti-market reforms. \textcite{Ryvkin2017} test a simplified version of \textcite{Acemoglu2013c} in a laboratory experiment. As in our design, subjects can shut down democracy through majority voting. They focus on the effect of inequality and productivity on the likelihood of democracy breakdown. Our paper complements \textcite{Ryvkin2017} by analyzing the effect of political gridlock.

Finally, we owe much to the literature on experimental political economy. One of the main reasons to use a laboratory experiment as a tool to understand citizens' decisions on the strengths of checks and balances lies in the difficulties of gathering observational data that allows for identifying causal relationships. Even when some data on checks on the executive is available for a set of countries (see, for example, the Polity IV index of executive constraints), it is more difficult to have reliable data on political gridlock, and, more importantly, it is very difficult to identify exogenous variation to test causal relationships. 

The rest of the paper proceeds as follows. In Section \ref{model} we present the model and the theoretical predictions. Section \ref{design} describes the experimental design. The results of the lab experiment are presented in Section \ref{results}. The paper ends with a few concluding remarks in Section \ref{concluding}. An online appendix contains the proofs of propositions, some descriptive statistics, balance tests, alternative Fisher tests, the experimental instructions, risk aversion measurement and the post-experimental questionnaire.
	
	\section{The model and theoretical predictions}\label{model}
	
In this section, we first introduce the model and then the main predictions to be tested.
	
	\subsection{The model}

We use a probabilistic voting model to study voters' decision on executive special powers.
	
	\paragraph{States of Nature} There are two possible states of nature: $s=0$ and $s=1$. The a priori probability of $s=1$ is $q\in [0,1]$. 
	
	\paragraph{Government.}There is a government composed of two branches, the executive ($X$) and the legislature ($L$). $X$ and $L$ observe the state of nature $s \in \{0,1\}$, and make policy proposals $p_X, p_L \in \{0,1\}$. For concreteness, we assume that the status quo policy is $p_0=0$. There are three types of politicians: (i) ``conservative'' ($X_C, L_C$) who always propose the status quo policy $p^{}_i = 0, i \in \{X,L\}$, (ii) ``reformist'' ($X_R, L_R$) who always propose the reform $p^{}_i = 1, i \in \{X,L\}$, and (iii) ``unbiased'' ($X_U, L_U$) who match the state of nature $p^{}_i = s, \ i \in \{X,L\}$. Note that neither $X$ nor $L$ are strategic agents.\footnote{\textcite{Forteza2019} present a model on similar lines and with strategic executive and legislature. The qualitative results regarding voters behavior are not different from what we find in the present simpler setting. } 	
	
	\paragraph{Voters.}	Voters observe the politicians' types and their proposals. They \textbf{do not} observe the state of nature, they know $q$. Once they observe the type of each politician, and their proposals, voters choose one among two possible institutions that lead to different mappings from proposals to implemented policies: checks and balances ($CB$) and special powers ($SP$). 
	
	\paragraph{Institutional arrangements.} With $CB$ the implemented policy is equal to both branches proposals if there is agreement, and to the status quo policy otherwise. With $SP$, the will of the executive prevails. Equations (\ref{eq:policyrules}) and Table \ref{table:proposalstopolicies} summarize how these two institutions work regarding policy $p$.
	
	\begin{align} \begin{array}{ll}
	p(CB) =  \left\{	\begin{array}{lc}  	p_X   	& if \ p_X = p_L \\
	p_{0}=0  	& if \ p_X \neq p_L \end{array} \right.    \\
	p(SP) = p_X
	\end{array} \label{eq:policyrules}
	\end{align}
	
	\begin{table}[H]
		\begin{center}
			\captionsetup{justification=centering}
			\caption{From proposals to policies with CB and SP.}\label{table:proposalstopolicies}
			\begin{tabular}{|c|c|c|c|} \hline
				
				$p_X$ & $p_L$ & $p(CB)$ & $p(SP)$ \\  \hline
				
				0 & 0 & 0 & 0 \\ \hline
				0 & 1 & 0 & 0 \\ \hline
				1 & 0 & 0 & 1 \\ \hline		
				1 & 1 & 1 & 1 \\ \hline 	
			\end{tabular}
		\end{center}
		\end{table}
	
	\paragraph{Rents.}With $CB$, there is an effective control of the government and hence corruption does not arise ($r=0$). With $SP$, the government extracts an amount of rents $r>0$.

	\paragraph{Preferences} Citizens care about policies and rents. They prefer the policy that matches the state of nature, $p=s$, and no rent extraction. Their utility function is:
	
	\begin{equation} \label{eq:utility}
	v(p,r)=  -a\mathbb{E}_s\left[(p - s)^2 \right] - r  ,
	\end{equation}
	where the parameter $a\geq 0$ captures the relative weight citizens give to policy mismatch and rents.
	
	Voter $i$ expected utility gains from voting for $CB$ rather than $SP$ is:
	\[ v(CB) - v(SP) +\varepsilon_{i} \] 
	where $v(CB)= v(p(CB),r(CB))$, $v(SP)= v(p(SP),r(SP))$, $r(CB)=0$, $r(SP)=r$, and $\varepsilon_{i}\in [-\infty,+\infty]$ is a random variable with mean $0$, and distribution function $F$.\footnote{$\varepsilon$ captures in a very stylized form voters' heterogeneity as well as the uncertainty that the analyst has about relevant citizens' traits, such as their preferences, attitudes, etc. The assumption is used in the tradition of random utility  \parencite{McFadden1975} and probabilistic voting models \parencite{Lindbeck1987}. For a similar assumption in the context of a lab experiment on electoral accountability see \textcite{Landa2015}.}

	The timing is as follows. First, $X$ and $L$ propose policies $p_X$ and $p_L$. Second, voters decide whether or not to give $X$ special powers. At this time, voters observe (i) $X$ and $L$ types, (ii) policy proposals of $X$ and $L$, and (iii) the realization of their preference shock $\varepsilon_{i}$, but they \textbf{do not observe the state of nature}. 
	
	\subsection{Predictions to be tested}
		
	Using the policy rules (\ref{eq:policyrules}) in the utility function (\ref{eq:utility}) we get:	
	\begin{align} \begin{array}{ll}
	v(CB)=  \left\{  \begin{array}{lc}   -a\mathbb{E}_s\left[(p_X - s)^2 \right]     & if \ p_X=p_L, \\
	-a\mathbb{E}_s\left[(p_0 - s)^2 \right]    & otherwise \end{array} \right. \\
	v(SP)=  -a\mathbb{E}_s\left[(p_X - s)^2 \right] -r   
	\end{array} \label{eq:vCB&vSP}
	\end{align}	
	
	Note that $v(SP) - v(CB)=-r$ if either  $p_X=p_L$ or $p_X=0$ and $p_L=1$. The only remaining case is $p_X=1$ and $p_L=0$, that is, $X$ proposes a policy change blocked by $L$, which we refer as \textbf{political gridlock}. 
	If neither $X$ nor $L$ is unbiased, then voters have to rely on the prior probability $q$ to compute the expected utility in the above equations. In this case we have:  $v(SP) - v(CB)=-a[\mathbb{E}_s(1-s)^2-\mathbb{E}_s(s^2)]-r=-a(1-2q)-r$.
	If at least one of the two rulers is unbiased, then voters can use Bayes rule to deduce the true state of nature from their policy proposals. Therefore, if $X$ is unbiased $v(SP) - v(CB)=-r+a$, and if $L$ is unbiased $v(SP) - v(CB)=-r-a$.
	
	Using the previous observation and equations (\ref{eq:vCB&vSP}) we have that:
 
	\begin{align} \label{eq:vSP-vCB}
	v(SP) - v(CB)  =  \left\{  \begin{array}{ll} 	- r-a(1-2q) 	& if \ \ X_R, L_C,  \\
	- r-a 			& if \ \  X_R, L_U, p_L=0,  \\
	- r+a 			& if \ \  X_U, L_C, p_X=1,  \\
	- r 			& otherwise. 		\end{array} \right. 
	\end{align}
	
	We expect that $i$ vote for $SP$ iff $ v(CB) - v(SP) +\varepsilon_{i} < 0$.\footnote{We assume voters prefer $CB$ in case of indifference, so we use strict inequality.} Thus, the probability that citizen $i$ votes for $SP$ is:
	\begin{eqnarray} \label{eq:probSP}
	Pr(SP)=Pr(\varepsilon_{i} < v(SP)-v(CB))= F(v(SP)-v(CB))
	\end{eqnarray} 
	
	We can now derive several predictions of the model to be tested in the experiment.
	
	In Proposition \ref{prop:gridlockeffects}, we analyze the impact of political gridlock on the probability of $SP$. We show that, depending on the prevailing circumstances, political gridlock may raise or reduce the probability of $SP$ and provide precise characterizations of these circumstances. In order to test the impact of political gridlock on the probability of $SP$, we use the benchmark of no political gridlock. This proposition provides our first four hypotheses: H1 to H4. 
	
	\begin{proposition} \label{prop:gridlockeffects}{The effects of political gridlock.}	
		
		\begin{enumerate}
			\item Political gridlock (weakly) \textbf{raises} the probability of $SP$ iff the gridlock occurs with
			\begin{itemize}
				\item H1: biased $X$ and $L$ ($X_R$ and $L_C$) and the reform is ex ante beneficial ($q>1/2$), or 
				\item H2: unbiased $X$ ($X_U$).
			\end{itemize}
			\item Political gridlock (weakly) \textbf{reduces} the probability of $SP$ iff the gridlock occurs with 
			\begin{itemize}
				\item H3: biased $X$ and $L$ ($X_R$ and $L_C$) and the reform is ex ante harmful ($q\leq 1/2$), or
				\item H4: unbiased $L$ ($L_U$).\footnote{By ``weakly raises'' and ``weakly reduces'' we mean ``does not reduce'' and ``does not raise'', respectively.} 
			\end{itemize}
		\end{enumerate}	
		
	\end{proposition}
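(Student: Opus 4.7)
The plan is to reduce the proposition to a direct monotonicity argument on the utility differential $v(SP)-v(CB)$, which was already computed case-by-case in equation (\ref{eq:vSP-vCB}). Because $Pr(SP)=F(v(SP)-v(CB))$ and $F$ is a CDF, hence weakly increasing, comparing the probability of $SP$ across scenarios reduces to comparing the values of $v(SP)-v(CB)$ across those scenarios. In particular, political gridlock weakly raises (resp.\ reduces) $Pr(SP)$ relative to any no-gridlock benchmark if and only if $v(SP)-v(CB)$ under gridlock is at least (resp.\ at most) its value in the no-gridlock case, which equation (\ref{eq:vSP-vCB}) pins down to $-r$.

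I would then proceed by exhaustive enumeration of the gridlock scenarios. Gridlock requires $p_X=1$ and $p_L=0$, so I would rule out in one line the configurations that cannot yield gridlock: $X_C$ always proposes $0$, $L_R$ always proposes $1$, and a pair $(X_U,L_U)$ cannot disagree since both track $s$. What remains are exactly four gridlock-compatible configurations, corresponding to the four non-trivial cases of equation (\ref{eq:vSP-vCB}): $(X_R,L_C)$ with differential $-r-a(1-2q)$; $(X_R,L_U)$ with $p_L=0$ and differential $-r-a$; $(X_U,L_C)$ with $p_X=1$ and differential $-r+a$; and (subsumed under the ``otherwise'' branch via the inferred state) any remaining pairing that nonetheless yields $-r$.

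The proof then amounts to comparing each differential to $-r$. For $(X_R,L_C)$, $-r-a(1-2q)\gtreqless -r$ according to $q\gtreqless 1/2$, giving H1 (weak increase when $q>1/2$) and H3 (weak decrease when $q\leq 1/2$). For $(X_U,L_C,\,p_X=1)$, $-r+a\geq -r$ since $a\geq 0$, giving H2. For $(X_R,L_U,\,p_L=0)$, $-r-a\leq -r$, giving H4. The converse direction of the ``iff'' follows because these are the only gridlock configurations, so the absence of the listed conditions means either no gridlock or a scenario already covered by one of the other three sub-cases, so no other configuration can produce a strict change in the same direction.

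The main obstacle is not analytical but taxonomic: making sure the enumeration of politicians' types and proposals compatible with gridlock is complete and that the claim is correctly phrased as a biconditional. Once equation (\ref{eq:vSP-vCB}) is in hand and monotonicity of $F$ is invoked, the remaining content is bookkeeping over four cases and a one-line sign comparison in each, with the only genuine subtlety being the sign of $1-2q$ in case H1/H3 and the use of Bayes' rule to collapse the unbiased-politician cases into deterministic inferences about $s$.
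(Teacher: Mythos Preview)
Your proposal is correct and follows essentially the same route as the paper's proof: invoke monotonicity of $F$ via equation (\ref{eq:probSP}), note that the no-gridlock benchmark has $v(SP)-v(CB)=-r$, and then compare each gridlock case in equation (\ref{eq:vSP-vCB}) to $-r$. One small slip: there are exactly \emph{three} gridlock configurations, not four---you already (correctly) ruled out $(X_U,L_U)$, and the ``otherwise'' branch of equation (\ref{eq:vSP-vCB}) is precisely the no-gridlock case, not a fourth gridlock scenario---so you can drop the hedged fourth item.
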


The next propositions study the effect of $q$ and $r$ on the probability of $SP$.
	
	\begin{proposition}  \label{prop:qeffects}{H5: The effects of $q$.}	
		
		The probability that voters grant $SP$ is a non-decreasing function of the probability that the reform is beneficial for citizens iff the executive is reformist and the legislature is conservative. Otherwise, the probability of $SP$ does not depend on the probability $q$.	
	\end{proposition}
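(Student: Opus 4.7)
The plan is to read the result directly off equation (\ref{eq:vSP-vCB}) combined with the probability formula (\ref{eq:probSP}). All the heavy lifting — computing $v(SP)-v(CB)$ case by case using Bayesian updating on the politicians' types and proposals — has already been done to obtain (\ref{eq:vSP-vCB}), so this proposition is essentially a corollary.

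First, I would inspect the four branches of (\ref{eq:vSP-vCB}). The expression $v(SP)-v(CB)$ depends on $q$ only in the $X_R,L_C$ branch, where it equals $-r-a(1-2q)=-r-a+2aq$. In the other three branches — the $X_R,L_U$ gridlock with $p_L=0$, the $X_U,L_C$ gridlock with $p_X=1$, and the residual "otherwise" branch that absorbs every configuration without gridlock as well as the $p_X=0$, $p_L=1$ disagreements — the differential is a constant in $q$. This is intuitive: $q$ enters the analysis only when voters must fall back on the prior to evaluate the reform, which happens precisely when both branches are biased and their proposals disagree on a reform-versus-status-quo choice that actually matters under $SP$.

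Second, I would invoke $a\geq 0$ to conclude that on the $X_R,L_C$ branch the map $q\mapsto -r-a+2aq$ is non-decreasing (strictly increasing when $a>0$). Since $F$ is a cumulative distribution function and therefore monotone non-decreasing, equation (\ref{eq:probSP}) gives that $\Pr(SP)=F(v(SP)-v(CB))$ is non-decreasing in $q$. In the other three branches, $v(SP)-v(CB)$ is independent of $q$, so $\Pr(SP)$ is also independent of $q$.

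The only subtlety — and the one point I would flag explicitly rather than an actual technical obstacle — is the interpretation of the ``iff''. Strictly speaking, a constant function is non-decreasing, so the biconditional must be read in the intended sense that $\Pr(SP)$ \emph{genuinely depends} on $q$ (in a non-decreasing way) only in the $X_R,L_C$ configuration; the second sentence of the proposition, asserting that $\Pr(SP)$ is constant in $q$ otherwise, makes this reading explicit and closes the ``only if'' direction.
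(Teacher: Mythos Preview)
Your proposal is correct and follows essentially the same approach as the paper's own proof: read off from equation (\ref{eq:vSP-vCB}) that $v(SP)-v(CB)$ depends on $q$ only in the $X_R,L_C$ branch (where it is non-decreasing since $a\geq 0$), and then invoke the monotonicity of $F$ in (\ref{eq:probSP}). Your version is simply more explicit, and your remark on the intended reading of the ``iff'' is a helpful clarification that the paper leaves implicit.
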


	\begin{proposition}  \label{prop:renteffects}{H6: The effects of rents.}	
		
		The probability that voters grant $SP$ is not increasing in the amount of rents.

	\end{proposition}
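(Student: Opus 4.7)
The plan is to read the result directly off equation (\ref{eq:vSP-vCB}) together with the monotonicity of the distribution function $F$. I would first observe that in every one of the four cases listed in (\ref{eq:vSP-vCB}), the quantity $v(SP)-v(CB)$ is an affine function of $r$ with coefficient exactly $-1$: writing the four expressions side by side, the $r$-dependence is uniformly $-r$, while the remaining terms ($-a(1-2q)$, $-a$, $+a$, or $0$) depend only on types, proposals, and the primitive parameters $a,q$. Consequently $\partial [v(SP)-v(CB)] / \partial r = -1 < 0$ in every configuration of types and proposals.

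Second, I would invoke the fact that $F$ is a cumulative distribution function and therefore non-decreasing on $\mathbb{R}$. Combining this with (\ref{eq:probSP}), the probability
\[
\Pr(SP) \;=\; F\bigl(v(SP)-v(CB)\bigr)
\]
is the composition of a non-decreasing function $F$ with a strictly decreasing (in $r$) argument, and hence is non-increasing in $r$. This would be stated case by case (or uniformly across cases, since the $r$-coefficient is the same) to cover all rows of (\ref{eq:vSP-vCB}).

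There is no real obstacle here: the result is essentially an immediate corollary of the structure of the utility function (\ref{eq:utility}), in which rents $r$ enter linearly and are incurred only under $SP$, so that switching from $CB$ to $SP$ always costs one unit of utility per unit of $r$ regardless of the political configuration. The only modest subtlety worth flagging in the write-up is that the monotonicity is weak rather than strict, because $F$ need not be strictly increasing at the relevant thresholds; this matches the "non-increasing" phrasing of the proposition. I would close by noting that strict monotonicity would follow under the additional assumption that $F$ has full support, which is standard in probabilistic voting setups of the kind cited after equation (\ref{eq:utility}).
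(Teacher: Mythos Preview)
Your proposal is correct and follows exactly the route the paper takes: the paper's own proof simply states that the claim ``follows directly from equations (\ref{eq:vSP-vCB}) and (\ref{eq:probSP}),'' and your write-up just unpacks this by noting that $r$ enters with coefficient $-1$ in every branch of (\ref{eq:vSP-vCB}) and that $F$ is non-decreasing. The additional remarks about weak versus strict monotonicity are fine and consistent with the paper's ``not increasing'' phrasing.
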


	The model is mute regarding the effect of the framing and of the previous exposure of subjects to political gridlock, but we were interested in exploring these conditions as well, given the political baggage the subjects may come with to the laboratory. As we explain in further detail below, in five of the seven treatments the framing is neutral, with $X$ and $L$ presented as two decision makers, and $CB$ and $SP$ as rule 1 and rule 2, respectively. Also in these treatments $r$ is presented as the cost of rule 2. We conjectured that subjects might be less prone to granting $SP$ if they were told that $r$ are rents extracted by corrupt politicians rather than costs of the policy rule, even when their monetary gains were exactly the same with the corruption and neutral framing. Similarly, a political framing, that includes corruption plus the whole political wording, might have an impact on subjects willingness to grant $SP$. So we have the following additional hypotheses.
	
	\medskip
	
	\noindent \textit{H7: The probability of $SP$ is lower in the corruption than in the neutral framing.}
	\medskip
		
	\noindent \textit{H8: The probability of $SP$ is lower in the political than in the neutral framing.}
	\medskip

	\bigskip

	As we explain in detail in the next section, in the first part of the experiment where subjects are trained, they have to choose between $CB$ and $SP$ in ten different situations. We introduce variability in the frequency of political gridlock in the priming phase. The model predicts no impact of the history of political gridlock on subjects decisions. Indeed, if subjects understood the game perfectly and were rational, the history of stalemates in the priming phase should be irrelevant. However, subjects might behave differently depending on whether they were primed with frequent or infrequent political gridlock. 
	\medskip
	
		\noindent \textit{H9: Subjects exposed to a history of unfrequent political gridlock will be more willing to grant $SP$.}
	\medskip

Notice that while H1 to H6 rest on the assumption that individuals maximize the utility function depicted in Equation (\ref{eq:utility}), H7 to H9 depart from it.

	\medskip

\subsection{Discussion} \label{discussion}
In this section, we make some remarks regarding the assumptions of the model, and based on these remarks we discuss some extensions.
 
\paragraph{Effectiveness of checks and balances.} We assume that checks and balances are highly effective, implying zero rents when this mechanism is in place.\footnote{For this to be the case, the institutional design of separation of powers must guarantee appropriate opposition of interests  \parencite{Persson1997, Persson2000}.}  This assumption  simplifies illustrating the mechanism we want to highlight and, more importantly, provides a lower bound on the weakening of checks and balances caused by political gridlock. Indeed, we show that voters are willing to grant special powers even when checks and balances are extremely effective in curbing rent extraction. This preference would be even more pronounced with less effective checks and balances.

\paragraph{Dynamic consequences of checks and balances.} We consider costs linked to special powers within a single period. However, it is important to recognize that weakening checks and balances could have long-term consequences. Though we do not model these effects, the costs of rent extraction can be seen as a reduced form of the discounted expected costs of special powers. Future analysis might explicitly incorporate the decisions of voters and politicians in a dynamic model.

\paragraph{Accountability and the protection of minorities.} Separation of powers and checks and balances serve two primary functions in modern representative democracies. Firstly, it controls deviations by elected officials, such as corruption or pursuing private agendas \parencite[see, among others,][]{Persson1997, Persson2000, Stephenson2010, Forteza2019}. This ``horizontal accountability'' \parencite{ODonnell1998} or ``internal control'' \parencite{Besley2018} complements the ``vertical accountability'' or ``external control'' provided by elections. Secondly, separation of powers protects minorities against the ``tyranny of the majority''. While this was emphasized by classical advocates of separation of powers (Locke, Montesquieu, Madison, Tocqueville) and is also emphasized in several highly influential modern papers \parencite{Aghion2004, Hayek1960, Buchanan1962, Buchanan1975a, Maskin2004}, our focus remains on scenarios where the protection of minorities is not a prominent concern \parencite[see][for a similar choice]{Acemoglu2013c, Forteza2019}. Hence, in the present paper we abstract from this theme.

\paragraph{Strategic setting.} In our model, we assume that both $X$ and $L$ are not strategic agents. A conservative (reformist) politician consistently proposes the status quo (reform), while an unbiased politician proposes policies aligned with the state of nature. This simplification allows us to concentrate on voters' decisions. Nonetheless, extending the model to incorporate strategic behavior by $X$ and $L$ is feasible. In \textcite{Forteza2019}, a conservative legislature supports the status quo, and an executive endogenously determines proposed policies. This strategic setup introduces endogenous policy proposals and strategic interaction between voters and politicians. In this setting, a reform is only possible when it is proposed by the executive, and voters grant special powers. The former occurs only when the executive is not strongly pro-status quo biased, and the latter when voters’ expected gains from reform are larger than the costs of rent extraction. 

In this extended setting, the choices of a politician $j$ in the executive comes from the maximization of the following payoffs:
$$
u_j=-(p-s-\delta_j)^2+ a_j r_X -b r_{L},
$$
where $p$ is the implemented policy, $s$ the state of nature, $\delta_j \in \mathbb{R}$ the bias of the politician, $a_j>0$, $r_X$ the rents extracted by the executive, $b>0$, and $r_L$ the rents extracted by the legislature. The key parameter in this setting is the bias of the politician in the executive, $\delta_j$. For a sufficiently negative (positive) $\delta_j$, the politician always proposes $p=0$ ($p=1$). For intermediate values, the policy proposal depends on the state of nature. The qualitative results of this model closely mirror those we presented earlier.

\paragraph{Commitment.} We assume that politicians can commit to the policies they announce. As extensively discussed in \textcite{Forteza2019}, it is possible to extend the model to study a situation where politicians cannot commit to their platforms. Although the results in this context depend on the bias of the executive, the qualitative results regarding the occurrence of special powers remain consistent with those we presented previously.

\paragraph{Elections.} In the current model, we do not delve into the process of government election. When considering the identities of the politicians in the executive and legislature, one might question why the electorate would choose a conservative legislature and subsequently grant the executive special powers for further reforms. Would not it be preferable to avert political gridlock during elections? By doing so, voters could prevent gridlock without dismantling controls over the executive. In \textcite{Forteza2019}, an equilibrium is demonstrated for intermediate values of $q$ where voters opt for a reformist executive and a conservative legislature. In this equilibrium, voters refrain from granting the president the legislative majority required to implement their agenda. And yet they later vote for special powers with the added cost of rent extraction to advance the same proposed agenda.  This behavior stems from the fact that, while voters only know that $s=1$ occurs with probability $q$ during the election period, they observe the realized state of nature at the time they have to decide between checks and balances and special powers.

	\section{Experimental design}\label{design}
	
	The experiment implements the theoretical framework described above where subjects take the role of voters. As neither $X$ or $L$ in our model are strategic agents, we design a decision making experiment without strategic interactions. Therefore, $X$ and $L$ in the experiment are not real subjects. There is no decision made by $X$ and $L$ as conservative, reformist, and unbiased types always propose $p=0$, $p=1$, and $p=s$, respectively, and these types are observed by subjects. This simplifies the decision environment, and allows us to focus on the impact of political gridlock on subjects' support to the loosening of checks and balances.

We have seven treatments, which vary based on the framing of the instructions and the combination of parameters in the model. As this is the first experimental design to address this question, we lack guidance regarding parameter values from previous studies. We simulated the theoretical model to choose a set of parameter values that should be detectable in our experiment, provided the variance of the preference parameter $\varepsilon$ in the  pool of subjects is not much higher than the variance assumed in the simulations.
Table \ref{table:treatments-between} summarizes the conditions that define each treatment (as well as the number of sessions conducted and the number of participants in each treatment).

	\begin{table}[H]
		\centering
		\begin{threeparttable}
			\caption{Treatments}
			\begin{tabular}{ c c c c c c c }
				\toprule
				\multicolumn{1}{p{5.39em}}{Treatment} & Rents & \multicolumn{1}{p{3cm}}{\centering Probability reform  \\ is beneficial} &  \multicolumn{1}{p{3cm}}{\centering Frequency of \\ gridlock} & Framing & \# Sessions & \#Subjects \\
				\midrule
				1     & $r_L<a$ & $q_H$ & $0.3$   & neutral & 2 & 26 \\
				\midrule
				2     & $r_L<a$ & $q_L$ & $0.2$   & neutral & 1 & 16 \\
				\midrule
				3     & $r_L<a$ & $q_H$ & $0.6$  & neutral & 4 & 40 \\
				\midrule
				4     & $r_L<a$ & $q_L$ & $0.6$  & neutral & 5 & 49\\
				\midrule
				5     & $r_H>a$ & $q_H$ & $0.6$  & neutral & 7 & 46\\
				\midrule
				6     & $r_L<a$ & $q_H$ & $0.6$  & corruption & 5 & 33 \\
				\midrule
				7     & $r_L<a$ & $q_H$ & $0.6$  & political & 6 & 33 \\
				\bottomrule
			\end{tabular}
			\label{table:treatments-between}
		\end{threeparttable}
				\Fignote{Notes: $r_H$ and $r_L$ denote high and low rents, respectively. $q_H$, and $q_L$ denote high and low probability of $s=1$, respectively. In particular, we assume for the experiment the following values: $r_H=96$, $r_L=24$, $q_H=0.9$, $q_L=0.2$, and $a=80$.}
	\end{table}
				
	In the first five treatments, the framing is neutral: the executive and the legislature are introduced as two decision makers ($X$ and $L$), $CB$ are referred as ``rule 1'', $SP$ as ``rule 2'', and rents are explained as the cost of adopting rule 2  \parencite[see][for similar neutral wording]{Agranov2015, Agranov2018,DalBo2010, Ryvkin2017}. The cost of using rule 2 is low in the first four treatments, and we vary the probability that the reform is beneficial and the frequency of gridlock in the priming phase. Treatment five has a high cost of implementing rule 2.
		
	Treatment 6 has the same parameter combination as Treatment 3 with the difference that the cost of rule 2 is framed as the loss subjects have from corruption (the rents stolen by $X$). Only in Treatment 7 we frame everything in terms of political decisions in a presidential system: decision makers are the executive and the legislature, the subjects are citizens voting on checks on the executive, rules 1 and 2 are $CB$ and $SP$, respectively, and the cost of special powers is corruption \parencite[see][for a similar wording]{Leight2018}. These two treatments are introduced to understand how framing impacts politically-related decisions. 
	
	Each treatment is divided into three different stages, described in detail below.

	Our experimental design allows us to make comparisons both between and within-subjects. The within-subject component of our design is aimed at reducing the variance of unobserved effects, increasing the precision of the estimation of treatment effects \parencite{List2011}. 
	
	\subsection{Stage 1: Training + priming}
		
	In stage 1 of the experiment, subjects go through a training stage, where the decision making is explained and they face a series of scenarios they have to understand and respond to. Subjects face 10 different scenarios, each one consisting of two parts.
	
	In the first part, subjects have to (i) predict policy proposals knowing politicians' types and the state of nature, (ii) tell whether there is political gridlock, and (iii) predict policies with and without $SP$. The answers to these questions are either correct or incorrect. This part is also both a learning and a priming phase, as different treatments expose subjects with different scenarios that have varying states of nature and types of politicians. For the purposes of both learning and incentivising subjects to correctly respond in each scenario, if subjects incorrectly answer the questions of a specific scenario (after four tries), the program automatically shows them the correct answers and allows them to move to the next part (and are penalized with a payment of 0 in that round). 
		
	In the second part of each scenario, subjects have to choose between $CB$ and $SP$ (or rule 1 or 2 depending on the treatment). 
	For the purposes of decision-making, the decision screen gives subjects the opportunity to consult the cost associated to each rule (remember that the cost associated to rule 1 is $0$ but for rule 2 the cost is positive).		
	
	At the end of this stage, subjects are asked about the frequency of political gridlock. They are presented with a question that ranges from 0 to 100 percent (in brackets of 10). 
	
	\subsection{Stage 2: Uncertainty about the state of nature} \label{section:stage2}
	
	The second stage is the relevant one to test the predictions of the model. Here, participants do not observe the state of nature. They only know the probability of $s=1$ (i.e. $q$) and have to choose a rule based on this information. Subjects make decisions in 14 scenarios, each consisting of two parts.  
	
	In the first part, subjects are informed of the policy proposals and types of politicians, but not of the state of nature. Subjects have to answer: (i) whether there is political gridlock or not, and (ii) the policy implemented with each rule.\footnote{Notice that while we call players attention on the possibility that the legislature blocks a reform, we never tell subjects they benefit from granting $SP$. We do however, remind them in each case that choosing $SP$ has a cost.} The answers to these questions are either correct or incorrect. If subjects incorrectly answer the questions of a specific scenario (after four tries), the program automatically shows them the correct answers and allows them to move to the next part (and are penalized with a payment of 0 for that scenario).  
	
	In the second part, subjects have to choose between $CB$ and $SP$ (or rule 1 or 2 depending on the treatment). For the purposes of decision-making, the decision screen gives subjects the opportunity to be reminded of the cost associated to each rule. 
	
	The 14 scenarios are all the possible consistent combinations of $X$ and $L$ types and proposals, and are common to all treatments. In Table \ref{table:SPgains}, we present the characteristics associated with each of the 14 scenarios, and summarize the experimenter's rational expectation of citizens expected gains from $SP$ in each treatment, i.e. $\mathbb{E}[v(SP)-v(CB)+\varepsilon_{i}]=v(SP)-v(CB)$.

	\begin{table}[H]
		\centering
		\caption{Expected net gains from $SP$ across treatments}  \label{table:SPgains}%
		\begin{threeparttable}
			\begin{tabular}{cccccccrrrrrrr}
				\cmidrule{1-14}  \multicolumn{1}{p{4em}}{\multirow{2}[4]{*}{Scenarios }} & \multicolumn{2}{c}{Politicians types} &       & \multicolumn{2}{c}{Proposals} &       & \multicolumn{7}{c}{Expected net gains from $SP$ in each treatment} \\
				\cmidrule{2-3}\cmidrule{5-6}\cmidrule{8-14}    \multicolumn{1}{c}{in Stage 2}    & X     & L     &       & \multicolumn{1}{c}{$p_X$} & \multicolumn{1}{c}{$p_L$} &       & \multicolumn{1}{c}{1} & \multicolumn{1}{c}{2} & \multicolumn{1}{c}{3} & \multicolumn{1}{c}{4} & \multicolumn{1}{c}{5} & \multicolumn{1}{c}{6} & \multicolumn{1}{c}{7} \\
				\midrule
				1     & Conservative     & Conservative     &       & 0     & 0     &       & -24   & -24   & -24   & -24   & -96   & -24   & -24 \\
				2     & Conservative     & Reformist     &       & 0     & 1     &       & -24   & -24   & -24   & -24   & -96   & -24   & -24 \\
				3     & Conservative     & Unbiased     &       & 0     & 0     &       & -24   & -24   & -24   & -24   & -96   & -24   & -24 \\
				4     & Conservative     & Unbiased     &       & 0     & 1     &       & -24   & -24   & -24   & -24   & -96   & -24   & -24 \\
				5     & Reformist     & Conservative     &       & 1     & 0     &       & 40    & -72   & 40    & -72   & -32   & 40    & 40 \\
				6     & Reformist     & Reformist     &       & 1     & 1     &       & -24   & -24   & -24   & -24   & -96   & -24   & -24 \\
				7     & Reformist     & Unbiased     &       & 1     & 0     &       & -104  & -104  & -104  & -104  & -176  & -104  & -104 \\
				8     & Reformist     & Unbiased     &       & 1     & 1     &       & -24   & -24   & -24   & -24   & -96   & -24   & -24 \\
				9     & Unbiased     & Conservative     &       & 0     & 0     &       & -24   & -24   & -24   & -24   & -96   & -24   & -24 \\
				10    & Unbiased     & Conservative     &       & 1     & 0     &       & 56    & 56    & 56    & 56    & -16   & 56    & 56 \\
				11    & Unbiased     & Reformist     &       & 0     & 1     &       & -24   & -24   & -24   & -24   & -96   & -24   & -24 \\
				12    & Unbiased     & Reformist     &       & 1     & 1     &       & -24   & -24   & -24   & -24   & -96   & -24   & -24 \\
				13    & Unbiased     & Unbiased    &       & 0     & 0     &       & -24   & -24   & -24   & -24   & -96   & -24   & -24 \\
				14    & Unbiased     & Unbiased     &       & 1     & 1     &       & -24   & -24   & -24   & -24   & -96   & -24   & -24 \\
				\bottomrule
			\end{tabular}%
		\end{threeparttable}
	\end{table}%
	
	Subjects might experience order effects ---learning or getting tired--- along the experiment. To avoid confounding the order and the treatment effects, experimenters sometimes randomize the order  \parencite{List2011}. In our design we can control for order effects comparing the gridlock and non-gridlock rows that come both before and after the gridlock row. The effect of order can then be identified without randomizing the order of treatments. We have no evidence of such effects. Even though we did not perform a formal systematic analysis of order effects, we did check that our results are not sensitive to the inclusion of a control for the order of tasks in the experiment. 

	\subsection{Stage 3: Post-experimental survey}
	
		After the experiment is finished, subjects respond to a questionnaire. We capture a measure of individual risk aversion (or tolerance to risks) as well as demographic and socio-economic questions (age, education, education of parents, gender), beliefs about different relevant topics, such as income distribution, competition, political leadership, and self-placement on an income scale. Most of the questions are taken from the world values survey (wvs), which facilitates the comparison of our sample with country wide survey results. To measure risk aversion, subjects are presented with a Multiple Price List (MPL) \parencite{Holt2002}.
		We use these variables as controls in our analysis, to account for the political and socio-economic baggage our subjects might come with into the lab. The questionnaire, including the risk aversion task can be found in the online Appendix (Section \ref{Ques}). 
	\subsection{Experimental procedure}
	
	We conducted all experimental sessions in 2019, between August and October, at the Experimental Laboratory of the Faculty of Social Sciences at Universidad de la Rep\'ublica, Uruguay.  Participants were randomly assigned to one of seven treatments. Each session consisted of the same treatment. Instructions included a set of tables that explained how decision makers actions, proposals and the state of nature could be combined, as well as the payoffs for each combination. An example of those instructions can be found in the online Appendix (Section \ref{Instr}). Subjects were recruited using the online recruitment program ORSEE \parencite{Greiner2015}. We implemented the experiment using z-Tree \parencite{Fischbacher2007}. Questions before and during the experimental sessions were answered in private at the subject's workspace by the experimenters.
	  
	We conducted 30 sessions, with a total of 243 subjects. No subject participated in more than one session. The experiment lasted, on average, 90 minutes, and subjects' minimum and maximum earnings were equivalent to $3.5$ and  $15.4$ American dollars, respectively.\footnote{These earnings correspond to 130 and 568 Uruguayan pesos, respectively. The minimum wage in Uruguay in 2019 was 2 American dollars per hour.} 
	The final payment consists of one randomly selected decision for stage 1, one for stage 2 and one for Stage 3, plus a fee for filling out the post-experimental questionnaire and a reimbursement for transportation costs (the last two stages of the payment are in lieu of a show-up fee). The payments for stages 1 and 2 depend on the individual choices between rules 1 and 2, while the payment for Stage 3 depends on the option chosen by the individual in the selected row (A or B) and a random draw based on the probabilities of each potential outcome in the chosen option. The decisions selected for payment in Stages 1, 2 and 3 and the outcome draws for Stage 3 were randomly done by the computer during the experiment. As cash payments in experiments are not allowed in Uruguay, subjects were paid at the end of the experimental session, in private, with a gift card that could be used in one of the supermarkets in Uruguay with the largest number of branches.

    \section{Methods and Results}\label{results}          

In this section, after a brief description of the methods, we present and discuss our results. Additionally, descriptive statistics of the results of the experiment, and an analysis of balance of some covariates between treatment arms can be found in sections \ref{descriptive} and \ref{balance} of the Appendix, respectively.
	
	\subsection{Hypothesis testing}
  
  We observe subjects binary choices over rules.  Each subject is asked to choose between the two rules fourteen times, so the data has a panel structure with 14 periods. The characteristics of politicians and policy proposals varied across the fourteen periods, generating different environments or hypothetical situations, as indicated in Table \ref{table:SPgains}. We exploit this longitudinal variation to study the impact of political gridlock on individuals choices in different environments. 
   
  To test our main hypotheses, we regress the binary variable $SP$ on dummies that adopt value one when the conditions defining each hypothesis are fulfilled. Specifically, we run simple linear regressions of the following form:
\begin{equation} \label{eq:sp1}
SP_{it} = \sum_{k=1}^{k=9} \beta_k Hk_{it} + \beta_x x_{it} 
\end{equation}
where: 
\begin{itemize}
\item $SP_{it}=1$ if subject $i$ chose rule 2 (SP) in period $t$, and zero otherwise; 	
\item $Hk_{it}=1$ if conditions that define hypothesis $k$ are fulfilled in observation $it$, zero otherwise; and
\item $x_{it}$ is a set of controls described in Table \ref{table:main}. 
\end{itemize}

  Table \ref{table:treatmentsandcontrols} summarizes the conditions of the treatment and control groups in each of our nine main hypotheses. 
  
We also explore differences in the response to political gridlock with harmful reforms among subgroups of participants. To do that, we run the regression (\ref{eq:sp2}), where we interact the $H3_{it}$ and $H4_{it}$ variables with a set of dummy variables $z$ that divide subjects along several dimensions, namely (i) whether the subject made mistakes in part 2, (ii) risk aversion, (iii) gender, (iv) ideological self identification, (v) preference for strong leader, and (vi) framing (corruption and political framing).\footnote{See the note in Table \ref{table:main} for a detailed description of these variables}
  
  \begin{equation} \label{eq:sp2}
  	SP_{it} = \sum_{k=3}^{k=4} \beta_{kz} Hk_{it}z_{it} + \sum_{k=3}^{k=4} \beta_k Hk_{it} + \beta_z z_{it} + \beta_x x_{it} 
  \end{equation}
  
We run OLS regressions, and control for multiple hypotheses testing (mht) looking at Family-wise error rate (FWER) \parencite{Barsbai2020, List2019} and False discovery rate (FDR) \parencite{Anderson2008}.\footnote{FWER routines control the probability that there is one or more false rejections in the set of hypotheses. FDR routines focus on the frequency of false rejections. The former tends to be more conservative in the sense that it is usually less likely to reject H0 with FWER than FDR routines, but this implies an increased probability of type II errors.  \textcite{McKenzie2020} presents a very helpful overview of multiple hypothesis testing commands in STATA.} The mht issue arises in our setting primarily due to the several arms of the experiment. It also arises in an exploratory analysis we conducted to study an unexpected result (see section \ref{excesssp}). We do not have mht stemming from several outcomes or several estimators. In any case, as it is common practice, we also report the unadjusted p-values.

  Random assignment of subjects to treatments minimizes the risk of confounding the impact of the treatment-defining characteristics with individual idiosyncratic preferences for rules (and other individual traits), and thus allows us to identify the impact of ``time''-invariant treatments on the variable of interest.\footnote{Running panel fixed effect regressions with our data would impede the identification of these effects without adding consistency.} We cluster at subjects level to account for the fact that each subject plays several rounds.\footnote{We briefly comment below on the effects of clustering at session rather than individual level.}     
   
 \begin{table}[htbp]
 	\centering
 	\caption{Treatments and controls}
 	\begin{tabular}{lll}
 		\toprule
 		Hypothesis & \multicolumn{1}{l}{Treatment} & Control \\
 		\midrule
 		H1    &  $p_X=1, p_L=0, X_R, L_C, q>0.5$   	& $p_X=0 \ \& \ p_L=0$	or $p_X=0 \ \& \ p_L=1$ or $p_X=1 \ \& \ p_L=1$ \\
 		H2    &  $p_X=1, p_L=0, X_U$ 				& $p_X=0 \ \& \ p_L=0$	or $p_X=0 \ \& \ p_L=1$ or $p_X=1 \ \& \ p_L=1$	\\
 		H3    &  $p_X=1, p_L=0, X_R, L_C, q<0.5$  	& $p_X=0 \ \& \ p_L=0$	or $p_X=0 \ \& \ p_L=1$ or $p_X=1 \ \& \ p_L=1$	\\
 		H4    &  $p_X=1, p_L=0, L_U$  				& $p_X=0 \ \& \ p_L=0$	or $p_X=0 \ \& \ p_L=1$ or $p_X=1 \ \& \ p_L=1$	\\
 		H5    &  $p_X=1, p_L=0, X_R, L_C, q>0.5$	& $p_X=1, p_L=0, X_R, L_C, q<0.5$	\\
 		H6    &  $r_H$								& $r_L$	\\
 		H7    &  $T=6$ (corruption framing)			& $T \neq 6$	\\
 		H8    &  $T=7$ (political framing)			& $T \neq 7$	\\
 		H9    & $T \leq 2$ & $T >2 $	\\
 		\bottomrule
 	\end{tabular}%
 	\label{table:treatmentsandcontrols}%
 \end{table}%
 
As a robustness check, we also compute Fisher tests of differences in frequencies of $SP$ between specific treatments that differed in one and only one dimension. This approach provides very clean comparisons of frequencies between treatments keeping all else equal, but at the cost of a dramatic drop in the number of observations in each test. While the Fisher test is appropriate with this type of data, the power is obviously smaller with the Fisher tests than with the regressions. In the next section, we present and discuss the results obtained with the regression analysis. The results obtained using Fisher tests are presented in the online Appendix (Section \ref{Fisher}). The main results are the same with both approaches.

	\subsection{Main results}
	
We summarize our main results in Table \ref{table:main}. The second column contains the difference in the frequency of SP under the treatment indicated in the first column and in the control group. These are our parameters of interest. The third to fifth columns present the p-values of these differences without and with mht adjustments. The last column contains the frequency of SP in the control groups. So, for example, introducing treatment H1 rose the frequency of SP by 31 percentage points from a baseline of 10.2 per cent.   

The first four rows (H1 to H4) report the difference in frequencies of $SP$ that are associated to political gridlock (the $\beta_k$ coefficients in (\ref{eq:sp1})). The control group is the set of decisions in which subjects were faced with no political gridlock.

	\begin{table}[htbp] 		
		\centering
		\begin{threeparttable}
		\caption{Hypothesis testing: the impact of treatments on the frequency of $SP$.}\label{table:main}
    \begin{tabular}{lrrrrr}
	\toprule
	& \multicolumn{1}{l}{Difference} & \multicolumn{3}{c}{p-values} & \multicolumn{1}{l}{Control } \\
	\cmidrule{3-5}          &       & \multicolumn{1}{l}{Unadjusted} & \multicolumn{2}{c}{mht adjusted} & \multicolumn{1}{l}{mean} \\
	\cmidrule{4-5}          &       &       & \multicolumn{1}{l}{Barsbai et al.} & Anderson &  \\ \hline
        H1 & 0.310 & 0.000 & 0.000 & 0.003 & 0.102 \\ 
        H2 & 0.490 & 0.000 & 0.000 & 0.003 & 0.102 \\ 
        H3 & 0.187 & 0.003 & 0.041 & 0.013 & 0.102 \\ 
        H4 & 0.097 & 0.000 & 0.000 & 0.003 & 0.102 \\ 
        H5 & 0.098 & 0.187 & 0.945 & 0.531 & 0.347 \\ 
        H6 & -0.072 & 0.053 & 0.570 & 0.165 & 0.179 \\ 
        H7 & -0.029 & 0.423 & 0.995 & 0.944 & 0.168 \\ 
        H8 & -0.063 & 0.018 & 0.250 & 0.064 & 0.172 \\ 
        H9 & 0.029 & 0.494 & 0.994 & 0.944 & 0.161 \\  
	\bottomrule
\end{tabular}%

			\Fignote{Notes: Number of observations = 3038. Added controls are: (i) mistakes (dummy variable indicating whether the individual needed to make more than one attempt to answer the correct-incorrect questions), (ii) risk averse (dummy variable indicating whether the individual is risk averse measured using a multiple price list, see Holt and Laury, 2002), (iii) female dummy, (iv) right wing dummy ($=1$ if the individual ideological self identification lies to the right of the median in the experiment), (v) strong leader dummy ($=1$ if the subject chose good or very good to the question regarding the convenience of having a strong leader), (vi) corruption dummy ($=1$ in the treatment in which costs of rule 2 were presented as corruption), and (vii) political framing dummy ($=1$ in the treatment in which the exercise was presented in political terms). Some of these are control variables in the testing of one or more hypotheses, and the treatment variable in the testing of another hypothesis. This is the case, for example, of the political framing dummy, which is the treatment variable in H8. As a robustness check, we ran the same regression substituting the ten- for the two-point scale estimates for risk aversion and ideological self-identification and obtained the same results. 
	
  }
		\end{threeparttable}
	\end{table}%

	As expected, in our experiment political gridlock caused by biased politicians raised the probability of $SP$ when the reform was ex ante beneficial (H1). This type of gridlock caused a 31 percentage point rise in the frequency of political gridlock relative to the no gridlock case. Also, political gridlock caused by unbiased executives raised the frequency of $SP$ by 49 percentage points (H2). These results are statistically highly significant at the usual significance levels.   
	
	Unexpectedly, subjects also granted $SP$ in higher proportions when faced with gridlock caused by biased politicians even when the reform was ex ante harmful. This type of gridlock caused an almost 19 percentage point rise in the frequency of $SP$. This result contradicts our hypothesis 3.\footnote{Clustering at session rather than individual level, and using mht adjusted p-values, this coefficient became not significant at conventional levels. Other qualitative results did not change.} Also, subjects granted $SP$ in higher proportions when faced with gridlock even when the legislature was unbiased and truthfully ``warned'' that the reform was not beneficial. This gridlock caused an almost 10 percentage point rise in the frequency of $SP$, contradicting our hypothesis 4. These effects are statistically significant at the usual significance levels (using the Barsbai et al mht-adjusted or the Anderson mht-adjusted pvalues, H3 would not be statistically significant at 1 percent, but it would still be significant at 5 percent).
	
	The picture that emerges from these results is thus only partially consistent with the predictions of our model: \textbf{subjects chose \textit{SP} more frequently with than without political gridlock, but they did it even when the reform was ex ante harmful.}  
	It is worth mentioning that these unexpected results seem not to be driven by subjects' misunderstanding of the problem. Indeed, individuals learned in the priming phase, as the number of attempts needed to respond correctly the right-wrong type of questions gradually reduced along this phase.
 Almost 88 per cent 
	of answers were correct on the first decision of the second stage of the experiment, suggesting that subjects understood the basics of the experiment quite well.  
These findings suggest that excess of special powers cannot be explained by lack of understanding of the game. 
	
	In the fifth row of Table \ref{table:main} we present a test of H5. According to this hypothesis, voters willingness to grant $SP$ in the presence of political gridlock caused by biased politicians is increasing (more precisely, not decreasing) in the probability that the reform is beneficial. We set this probability at 0.2 and 0.9 in the control and treatment groups, respectively. The observed frequency of $SP$ in the control group was 34.7 percent. The treatment caused a 9.8 percentage point rise in the estimated frequency of $SP$. While this represents a 28 percent increase relative to the control, it is not statistically significant.  
	 
	Hypothesis H6 says that subjects willingness to grant $SP$ is not increasing in the amount of rents the executive can extract. The difference between rents in the control and treatment groups was set at about 13 percent of the maximum payoff subjects could obtain in the whole experiment. 
	High rents induced lower frequency of $SP$ in more than 7 percentage points, on average. The estimated effect looks ``large'', compared to the control mean of  17.9 percent, but the estimation is imprecise and we cannot reject the null of no effects.  
	 
    Subjects were on average less willing to grant $SP$ in the corruption (H7) and political (H8) framing treatments than in the control group, but the differences are not statistically significant at 5 percent according to the Barsbai et al mht- and the Anderson mht-adjusted pvalues.

	Finally, exposing subjects to a low frequency of political gridlock in the priming phase caused an almost 3 percentage point increase in the frequency of $SP$, but this effect is not statistically significant at the usual significance levels.
	 
	 \subsection{Excess of special powers} \label{excesssp}
	 
	 Our experiment was designed to study how actual subjects deal with the trade-off between delegation and control. We showed that, as expected, subjects were willing to loosen control to facilitate reform, but they overreacted: they weakened checks and balances even when the reform in the government agenda was not beneficial, given that there is a cost of granting $SP$. In this sense, voters were excessively willing to grant special powers or, in short, there is an excess of special powers in our results. In this section we revise some possible factors that might be driving this unexpected result. 
	 
	 In Table \ref{table:excessSP}, we present estimations of the differences among subgroups of subjects in the response to the two type of political gridlock that caused excess of $SP$, based on equation \ref{eq:sp2}.  
	 Notice these are differences in differences, and the control mean is the average response to political gridlock ---i.e. the difference in the frequency of $SP$ with and without gridlock--- in the control group.\footnote{The results in Table \ref{table:excessSP} in the column entitled `Control mean' are estimations of $\beta_k =\mathbb{E}[SP_i|Hj=1,z=0] - \mathbb{E}[SP_i|Hj=0,z=0] $ and in the column entitled `Difference' are estimations of $\beta_{kz}=[\mathbb{E}[SP_i|Hj=1,z=1] - \mathbb{E}[SP_i|Hj=0,z=1]] - [\mathbb{E}[SP_i|Hj=1,z=0] - \mathbb{E}[SP_i|Hj=0,z=0]],  j\in\{3,4\} $.  }
	 
	 \begin{table}[htbp]
	 	\centering
		\begin{threeparttable}
		\caption{Excess of special powers.}\label{table:excessSP}%
	     \begin{tabular}{lrrrrr}
	 	\toprule
	 	& \multicolumn{1}{l}{Difference} & \multicolumn{3}{l}{p values} & \multicolumn{1}{l}{Control } \\
	 	\cmidrule{3-5}          &       & \multicolumn{1}{l}{Unadjusted} & \multicolumn{2}{l}{Multiple hyp. testing adjusted} & \multicolumn{1}{l}{mean} \\
	 	\cmidrule{4-5}          &       &       & \multicolumn{1}{l}{Barsbai et al} & Anderson &  \\
	 	\midrule
        H3*mistakes & 0.017 & 0.907 & 1.000 & 1.000 & 0.212 \\ 
        H4*mistakes & -0.065 & 0.443 & 0.994 & 0.944 & 0.107 \\ 
        H3*risk averse & 0.016 & 0.922 & 0.995 & 1.000 & 0.204 \\ 
        H4*risk averse & -0.017 & 0.834 & 1.000 & 1.000 & 0.112 \\ 
        H3*female & 0.136 & 0.236 & 0.957 & 0.547 & 0.147 \\ 
        H4*female & -0.001 & 0.987 & 0.987 & 1.000 & 0.096 \\ 
        H3*right wing & -0.102 & 0.374 & 0.994 & 0.944 & 0.264 \\ 
        H4*right wing & -0.022 & 0.686 & 1.000 & 1.000 & 0.108 \\ 
        H3*strong leader & -0.183 & 0.208 & 0.944 & 0.531 & 0.250 \\ 
        H4*strong leader & 0.012 & 0.858 & 1.000 & 1.000 & 0.094 \\ 
        H4*corruption & 0.100 & 0.188 & 0.933 & 0.531 & 0.081 \\ 
        H4*political framing & -0.013 & 0.876 & 1.000 & 1.000 & 0.093 \\ 

	 	\bottomrule
	 \end{tabular}%
	 
 			\Fignote{See footnotes in table \ref{table:main}.}
		\end{threeparttable}
	 \end{table}%

	 \paragraph{Mistakes.} We first explore whether the excess of $SP$ was driven by individuals who made more mistakes, i.e. individuals who had a poorer understanding of the tasks. To explore this possibility, we take advantage of a recorded registry of mistakes committed by subjects in the experiment.  
	 
	 We do not find statistically significant differences in the impact of political gridlock with harmful reforms on the frequency of $SP$ chosen by subjects who did and did not make mistakes (rows 1 and 2 in Table \ref{table:excessSP}). When confronted with political gridlock caused by biased politicians and harmful reforms, subjects who did and did not make mistakes raised the frequency of $SP$ by about 0.231 and 0.212, respectively, so the difference in response is in the order of 0.017, which is not statistically significant (see the first row of Table \ref{table:excessSP}). Similarly, the difference in the response to political gridlock with an unbiased legislature was not significantly different from zero.	 
	
	 \paragraph{Risk aversion.} More than 82 per cent 
	 of subjects in our experiment were risk averse. They were on average less willing to grant $SP$ (0.13) than other subjects (0.31).
 	 However, we do not find statistically significant differences in the frequency of $SP$ between risk averse and other subjects in the presence of political gridlock with harmful reforms or with an unbiased legislature (rows 3 and 4 in Table \ref{table:excessSP}, respectively).  
	 
	\paragraph{Female.} Women were on average more willing to grant SP than men (first row in Table 9 in the online Appendix), but they were not statistically significantly more responsive to political gridlock with harmful reforms than men.
	
	\paragraph{Right wing.}Subjects to the right of the median self-identification in the left-right ideological line were on average more willing to grant $SP$ than subjects to the left (row 8 in Table \ref{table:descriptive2} in the online Appendix). The reverse was true in the environment of political gridlock with harmful reforms, but the differences are not statistically significant (rows 7 and 8 in Table \ref{table:excessSP}). 
	
	\paragraph{Strong leader.} As expected, subjects who said that having a strong leader who disregards the congress and elections was good or very good, granted $SP$ more frequently than subjects who did not support that claim. However, we cannot reject that their response to political gridlock with harmful reforms is the same as of other subjects. 
	
	\paragraph{Corruption and political framing.} Framing the cost of $SP$ as corruption and the whole environment in political terms did not cause statistically significant changes in the response to gridlock with unbiased legislatures.\footnote{The experiment did not include treatments with corruption and political framing and a low probability that the reform is beneficial. Hence, we have no data to compute the impact of corruption and political framing on the response to gridlock with biased politicians and harmful reforms (H3). }
	
	A limitation we faced in the analysis of factors driving excess of $SP$ was lack of power. For example, we only have 12 observations with the conditions of H3 and more than one check of the correct-incorrect answers, and this is far from the minimum number of observations needed to test a difference in proportions lower than 2 percent. In this environment, the frequency of $SP$ was 0.212 among subjects who provided the correct answers at the first attempt, and 0.229 among subjects who needed more attempts. More than seven thousand subjects per group would be necessary to significantly detect this difference in proportions. This lack of statistical power is a direct consequence of the unexpected nature of the excess of special powers. Being this result totally unexpected, we did not design the experiment to test specific hypotheses regarding factors driving this result. Our analysis in this section is thus exploratory.	
	   
	\section{Concluding remarks}\label{concluding}
	
	In this paper we present the results of a lab experiment in which subjects were asked to choose between two rules that resemble checks and balances and executive special powers. Under checks and balances, the legislature can block a reform proposed by the executive. Under special powers, the will of the executive prevails, so there is no political gridlock.
	
	As expected, political gridlock emerged as an important driver of special powers. Subjects in the experiment were very willing to grant special powers in the presence of political gridlock, and they did it not only when the reform was beneficial but also when it was harmful. In this sense, there was an excess of special powers caused by political gridlock in our experiment.  
	
	The excess of special powers arose in two cases. First, when both politicians are biased ---so no information can be elicited from their proposals--- and the reform is ex ante harmful. Second, when an unbiased legislature proposes the status quo policy and a biased executive proposes reform.   

	In the first case, the probability that the reform matched the state of nature was only 20 percent and politicians proposals were not informative, so there should have been little doubt that granting special powers would most likely bring bad outcomes. And yet, many subjects voted for it. In the second case, subjects may have failed to realize that the legislature was revealing the true state of nature. However, in the symmetric case in which the executive was unbiased, subjects seem to have responded voting for special powers in higher proportions. So it does not seem to be the case that subjects totally failed to realize that unbiased politicians' proposals conveyed valuable information.

The results in this paper show how fragile checks and balances can be. It is enough in our experiment to ask subjects to report whether a reform proposed by the executive is being blocked by the legislature to induce them to vote for the weakening of checks and balances even when the reform is harmful. While we did not expect this result, we think it resembles many real world cases in which strong leaders convinced citizens to grant them special powers arguing the opposition was blocking much needed reforms. Because we did not anticipate the excess of special powers though, our design is not specially suited to study this deviation from optimal behaviour. Our experiment does provide a clear answer to the question of why voters weaken checks and balances: to remove a political gridlock. But not always removing a gridlock is beneficial and we  cannot say when or why the executive succeeds in making this case. We did not have enough statistical power to test heterogeneity in excess of $SP$ among subgroups of subjects. Nevertheless, the analysis in this paper provides some guidance for future lines of research on this important topic. In particular, once we know that participants may grant $SP$ in excess, we can design an experiment with two branches comparing the frequency of $SP$ with and without an explicit  argument that removing $CB$ facilitates reform. 

Finally, in the model and the experiment, we assume that politicians can commit to the policies they announce. This assumption allows us to focus on voters' behaviors. The interaction between commitment and the prevalence of special powers is an interesting topic for future research. Although \textcite{Forteza2019} present some theoretical advances in this direction, it would be interesting to incorporate a situation without commitment in the experimental design. 
		 
\printbibliography[
heading=bibintoc,
title={Bibliography}]

@book{chavez2004rule,
  title={The rule of law in nascent democracies: Judicial politics in Argentina},
  author={Chavez, Rebecca Bill},
  year={2004},
  publisher={Stanford University Press}
}

@article{Acemoglu2013c,
  author   = {Acemoglu, Daron and Robinson, James and Torvik, Ragnar},
  title    = {Why Do Voters Dismantle Checks and Balances?},
  year     = {2013},
  volume   = {80},
  number   = {3},
  pages    = {845-875},
  abstract = {Voters often dismantle constitutional checks and balances on the executive. If such checks and balances limit presidential abuses of power and rents, why do voters support their removal?We argue that by reducing politician rents, checks and balances also make it cheaper to bribe or in?uence politicians through non-electoral means. In weakly-institutionalized polities where such non-electoral in?uences, particularly by the better organized elite, are a major concern, voters may prefer a political system without checks and balances as a way of insulating politicians from these in?uences. When they do so, they are e?ectively accepting a certain amount of politician (presidential) rents in return for redistribution. We show that checks and balances are less likely to emerge when the elite is better organized and is more likely to be able to in?uence or bribe politicians, and when inequality and potential taxes are high (which makes redistribution more valuable to the majority). We also provide case study evidence from Bolivia, Ecuador and Venezuela and econometric evidence on voter attitudes from a Latin American survey consistent with the model.},
  journal  = {Review of Economic Studies},
}

@article{Aghion2004,
  title={Endogenous political institutions},
  author={Aghion, Philippe and Alesina, Alberto and Trebbi, Francesco},
  journal={The Quarterly Journal of Economics},
  volume={119},
  number={2},
  pages={565--611},
  year={2004},
  publisher={MIT Press}
}

@article{Agranov2015,
  title={Equilibrium tax rates and income redistribution: A laboratory study},
  author={Agranov, Marina and Palfrey, Thomas R},
  journal={Journal of Public Economics},
  volume={130},
  pages={45--58},
  year={2015},
  publisher={Elsevier}
}

@article{Agranov2018,
  title={The effects of income mobility and tax persistence on income redistribution and inequality},
  author={Agranov, Marina and Palfrey, Thomas R},
  journal={European Economic Review},
  volume={123},
  pages={103372},
  year={2020},
  publisher={Elsevier}
}

@article{Anderson2008,
  title={Multiple inference and gender differences in the effects of early intervention: A reevaluation of the Abecedarian, Perry Preschool, and Early Training Projects},
  author={Anderson, Michael L},
  journal={Journal of the American statistical Association},
  volume={103},
  number={484},
  pages={1481--1495},
  year={2008},
  publisher={Taylor \& Francis}
}

@techreport{Barsbai2020,
  title={Information and the acquisition of social network connections},
  author={Barsbai, Toman and Licuanan, Victoria and Steinmayr, Andreas and Tiongson, Erwin and Yang, Dean},
  year={2020},
  institution={National Bureau of Economic Research}
}

@techreport{Besley2018,
  title={Executive constraints as robust control},
  author={Besley, Timothy and Mueller, Hannes},
  year={2018}
}

@book{besley2011pillars,
  title={Pillars of prosperity},
  author={Besley, Timothy and Persson, Torsten},
  year={2011},
  publisher={Princeton University Press}
}

@book{Buchanan1975a,
  title={The limits of liberty: Between anarchy and Leviathan},
  author={Buchanan, James M},
  number={714},
  year={1975},
  publisher={University of Chicago Press}
}

@article{Buchanan1962,
  title={The Calculus of Consent. Ann Arbor: University of Michigan Press},
  author={Buchanan, James M and Tullock, Gordon and others},
  journal={BuchananThe Calculus of Consent1962},
  year={1962}
}

@article{Buisseret2016,
  title={“Together or apart”? On joint versus separate electoral accountability},
  author={Buisseret, Peter},
  journal={The Journal of Politics},
  volume={78},
  number={2},
  pages={542--556},
  year={2016},
  publisher={University of Chicago Press Chicago, IL}
}

@article{DalBo2010,
  title={Institutions and behavior: Experimental evidence on the effects of democracy},
  author={Dal B{\'o}, Pedro and Foster, Andrew and Putterman, Louis},
  journal={American Economic Review},
  volume={100},
  number={5},
  pages={2205--29},
  year={2010}
}

@article{Fischbacher2007,
  title={z-Tree: Zurich toolbox for ready-made economic experiments},
  author={Fischbacher, Urs},
  journal={Experimental economics},
  volume={10},
  number={2},
  pages={171--178},
  year={2007},
  publisher={Springer}
}

@article{Forteza2019,
  title={When do Voters Weaken Checks and Balances to Facilitate Economic Reform?},
  author={Forteza, Alvaro and Pereyra, Juan S},
  journal={Economica},
  volume={86},
  number={344},
  pages={775--800},
  year={2019},
  publisher={Wiley Online Library}
}

@article{Fox2010,
  title={Partisanship and the Effectiveness of Oversight},
  author={Fox, Justin and Van Weelden, Richard},
  journal={Journal of Public Economics},
  volume={94},
  number={9-10},
  pages={674--687},
  year={2010},
  publisher={Elsevier}
}

@article{Greiner2015,
  title={Subject pool recruitment procedures: organizing experiments with ORSEE},
  author={Greiner, Ben},
  journal={Journal of the Economic Science Association},
  volume={1},
  number={1},
  pages={114--125},
  year={2015},
  publisher={Springer}
}

@book{Hamilton2009,
  title={The federalist papers},
  author={Hamilton, Alexander and Madison, James and Jay, John},
  year={2009},
  publisher={Yale University Press}
}

@book{Hayek1960,
  title={The constitution of liberty},
  author={Hayek, Friedrich August},
  year={2011},
  publisher={University of Chicago Press}
}

@article{Holt2002,
  title={Risk aversion and incentive effects},
  author={Holt, Charles A and Laury, Susan K},
  journal={American economic review},
  volume={92},
  number={5},
  pages={1644--1655},
  year={2002}
}

@article{Landa2015,
  title={Social identity and electoral accountability},
  author={Landa, Dimitri and Duell, Dominik},
  journal={American Journal of Political Science},
  volume={59},
  number={3},
  pages={671--689},
  year={2015},
  publisher={Wiley Online Library}
}

@article{Leight2018,
  title={Value for money? Vote-buying and politician accountability},
  author={Leight, Jessica and Foarta, Dana and Pande, Rohini and Ralston, Laura},
  journal={Journal of Public Economics},
  volume={190},
  pages={104227},
  year={2020},
  publisher={Elsevier}
}

@article{Lindbeck1987,
  title={Balanced-budget redistribution as the outcome of political competition},
  author={Lindbeck, Assar and Weibull, J{\"o}rgen W},
  journal={Public choice},
  volume={52},
  number={3},
  pages={273--297},
  year={1987},
  publisher={Springer}
}

@article{List2011,
  title={So you want to run an experiment, now what? Some simple rules of thumb for optimal experimental design},
  author={List, John A and Sadoff, Sally and Wagner, Mathis},
  journal={Experimental Economics},
  volume={14},
  number={4},
  pages={439--457},
  year={2011},
  publisher={Springer}
}

@article{List2019,
  title={Multiple hypothesis testing in experimental economics},
  author={List, John A and Shaikh, Azeem M and Xu, Yang},
  journal={Experimental Economics},
  volume={22},
  number={4},
  pages={773--793},
  year={2019},
  publisher={Springer}
}

@book{Locke1689,
  title={Two treatises of government and a letter concerning toleration},
  author={Locke, John},
  year={2008},
  publisher={Yale University Press}
}

@article{Maskin2004,
  title={The politician and the judge: Accountability in government},
  author={Maskin, Eric and Tirole, Jean},
  journal={American Economic Review},
  volume={94},
  number={4},
  pages={1034--1054},
  year={2004}
}

@article{McFadden1975,
  title={The revealed preferences of a government bureaucracy: Theory},
  author={McFadden, Daniel},
  journal={The Bell Journal of Economics},
  pages={401--416},
  year={1975},
  publisher={JSTOR}
}

@article{McKenzie2020,
  title={An overview of multiple hypothesis testing commands in Stata},
  author={McKenzie, David},
  journal={World Bank Development Impact (blog), June},
  volume={1},
  year={2021}
}

@book{Montesquieu1748,
  title={The spirit of laws},
  author={de Secondat Montesquieu, Charles},
  year={1977},
  publisher={Univ of California Press}
}

@article{ODonnell1998,
  title={Horizontal accountability in new democracies},
  author={O'Donnell, Guillermo A},
  journal={Journal of democracy},
  volume={9},
  number={3},
  pages={112--126},
  year={1998},
  publisher={Johns Hopkins University Press}
}

@article{Persson1997,
  title={Separation of powers and political accountability},
  author={Persson, Torsten and Roland, Gerard and Tabellini, Guido},
  journal={The Quarterly Journal of Economics},
  volume={112},
  number={4},
  pages={1163--1202},
  year={1997},
  publisher={MIT Press}
}

@article{Persson2000,
  title={Comparative politics and public finance},
  author={Persson, Torsten and Roland, Gerard and Tabellini, Guido},
  journal={Journal of political Economy},
  volume={108},
  number={6},
  pages={1121--1161},
  year={2000},
  publisher={The University of Chicago Press}
}

@article{Ryvkin2017,
  title={An experimental study of democracy breakdown, income and inequality},
  author={Ryvkin, Dmitry and Semykina, Anastasia},
  journal={Experimental Economics},
  volume={20},
  number={2},
  pages={420--447},
  year={2017},
  publisher={Springer}
}

@article{Stephenson2010,
  title={Political accountability under alternative institutional regimes},
  author={Stephenson, Matthew C and Nzelibe, Jide O},
  journal={Journal of Theoretical Politics},
  volume={22},
  number={2},
  pages={139--167},
  year={2010},
  publisher={SAGE Publications Sage UK: London, England}
}

@article{Svolik2020,
  title={When Polarization Trumps Civic Virtue: Partisan Conflict and the Subversion of Democracy by Incumbents},
  author={Svolik, Milan W},
  journal={Quarterly Journal of Political Science},
  volume={15},
  number={1},
  pages={3--31},
  year={2020},
  publisher={now publishers}
}

@article{Tommasi2014,
  title={Veto players and policy adaptability: An intertemporal perspective},
  author={Tommasi, Mariano and Scartascini, Carlos and Stein, Ernesto},
  journal={Journal of Theoretical Politics},
  volume={26},
  number={2},
  pages={222--248},
  year={2014},
  publisher={SAGE Publications Sage UK: London, England}
}

@Article{Deaton2018,
  author    = {Angus Deaton and Nancy Cartwright},
  journal   = {Social Science \& Medicine},
  title     = {Understanding and misunderstanding randomized controlled trials},
  year      = {2018},
  issn      = {0277-9536},
  note      = {Randomized Controlled Trials and Evidence-based Policy: A Multidisciplinary Dialogue},
  pages     = {2-21},
  volume    = {210},
  abstract  = {Randomized Controlled Trials (RCTs) are increasingly popular in the social sciences, not only in medicine. We argue that the lay public, and sometimes researchers, put too much trust in RCTs over other methods of investigation. Contrary to frequent claims in the applied literature, randomization does not equalize everything other than the treatment in the treatment and control groups, it does not automatically deliver a precise estimate of the average treatment effect (ATE), and it does not relieve us of the need to think about (observed or unobserved) covariates. Finding out whether an estimate was generated by chance is more difficult than commonly believed. At best, an RCT yields an unbiased estimate, but this property is of limited practical value. Even then, estimates apply only to the sample selected for the trial, often no more than a convenience sample, and justification is required to extend the results to other groups, including any population to which the trial sample belongs, or to any individual, including an individual in the trial. Demanding ‘external validity’ is unhelpful because it expects too much of an RCT while undervaluing its potential contribution. RCTs do indeed require minimal assumptions and can operate with little prior knowledge. This is an advantage when persuading distrustful audiences, but it is a disadvantage for cumulative scientific progress, where prior knowledge should be built upon, not discarded. RCTs can play a role in building scientific knowledge and useful predictions but they can only do so as part of a cumulative program, combining with other methods, including conceptual and theoretical development, to discover not ‘what works’, but ‘why things work’.},
  doi       = {https://doi.org/10.1016/j.socscimed.2017.12.005},
  file      = {:Deaton2018.pdf:PDF},
  keywords  = {RCTs, Balance, Bias, Precision, External validity, Transportation of results, Health, Economic development},
  timestamp = {2023-08-10},
  url       = {https://www.sciencedirect.com/science/article/pii/S0277953617307359},
}

@article{Mutz2019,
author = {Diana C. Mutz and Robin Pemantle and Philip Pham},
title = {The Perils of Balance Testing in Experimental Design: Messy Analyses of Clean Data},
journal = {The American Statistician},
volume = {73},
number = {1},
pages = {32-42},
year  = {2019},
publisher = {Taylor & Francis},
doi = {10.1080/00031305.2017.1322143},

URL = { 
    
        https://doi.org/10.1080/00031305.2017.1322143
    
    

},
eprint = { 
    
        https://doi.org/10.1080/00031305.2017.1322143
    
    

}

}

@article{Tsebelis1995,
  title={Decision making in political systems: Veto players in presidentialism, parliamentarism, multicameralism and multipartyism},
  author={Tsebelis, George},
  journal={British journal of political science},
  volume={25},
  number={3},
  pages={289--325},
  year={1995},
  publisher={Cambridge University Press}
}

\newpage
\begin{center}
	\Large{\textbf{Online Appendix (not for publication)}}    
\end{center}
\setcounter{page}{1}
\setcounter{section}{0}

\bigskip

\section{Proofs of the propositions}

	\begin{proof}[Proof of Proposition \ref{prop:gridlockeffects}.]
	Political gridlock may arise under three different configurations of politicians types, corresponding to the three first lines in equation (\ref{eq:vSP-vCB}). There is no gridlock under any other circumstances. 
	Let $Pr(SP|gr)$ and $Pr(SP|No-gr)$ represent the probabilities of $SP$ with and without political gridlock, respectively. Then, using equations (\ref{eq:vSP-vCB}) and (\ref{eq:probSP}) we have that:\footnote{As we mentioned before, when there is no political gridlock, $v(CB) - v(SP) =r$. Then voters support SP iff  $\varepsilon_{it} < -r$.}
	\begin{align} \label{eq:gridlockonSP}
		Pr(SP|gr) - Pr(SP|No-gr)  =  \left\{  \begin{array}{ll} 	F(- r-a(1-2q)) - F(- r) 	& if \ \ X_R, L_C,  \\
			F(- r+a) - F(- r)  \geq 0	& if \ \  X_U, L_C, p_X=1,  \\
			F(- r-a) - F(- r) \leq 0		& if \ \  X_R, L_U, p_L=0.  \end{array} \right. 
	\end{align}	
	If the gridlock occurs with $X_R$ and  $L_C$, then the probability of $SP$ is greater with than without political gridlock iff $q>1/2$ (first line in equation (\ref{eq:gridlockonSP})). If the gridlock occurs with  $X_U$, $L_C$ and $p_X=1$, then the probability of $SP$ is greater with than without political gridlock (second line in Equation (\ref{eq:gridlockonSP})). If the gridlock occurs with $X_R$, $L_U$ and $p_L=0$, then the probability of $SP$ is lower with than without political gridlock (third line in equation (\ref{eq:gridlockonSP})).  
\end{proof}

	\begin{proof}[Proof of proposition \ref{prop:qeffects}.]
	Equation (\ref{eq:vSP-vCB}) implies that $v(SP)-v(CB)$ is an increasing function of $q$, if $X_R, L_C$, and does not depend on $q$ otherwise. Equation (\ref{eq:probSP}) says that $Pr(SP)$ is a non decreasing function of $v(SP)-v(CB)$.
\end{proof}

	\begin{proof}[Proof of proposition \ref{prop:renteffects}.]
	The hypothesis follows directly from equations (\ref{eq:vSP-vCB}) and (\ref{eq:probSP}).  
\end{proof}

\section{Descriptive statistics} \label{descriptive}

\paragraph{Who are the subjects?} We recruited 243 students from different faculties of the Universidad de la República, Uruguay. In Table \ref{table:descriptive1}, we summarize some characteristics and beliefs of this group of subjects and the country population, using our questionnaire and the results of the 2011 world values survey for Uruguay .

Our population is more feminized than the wvs sample (63 and 53 percent female, respectively). Sixty percent grew up in Montevideo,\footnote{More precisely, this is the percentage of the subjects responding they lived in Montevideo at the age of 10} the capital city, where about 39 percent of the country population lives. About 36 and 43 percent of their fathers and mothers, respectively, attended the university. Almost two thirds went to public primary and secondary schools. The percentage of subjects who think their income is above the country median is 46 in our experiment and 26 in the wvs sample.

\paragraph{What do they believe politically?} Not surprisingly, beliefs are not alike in our convenience sample of university students and the Uruguayan population.\footnote{Recall we posed the same questions as the wvs. There are however some differences worth mentioning. First, the questionnaire is administered by an interviewer in the wvs and self-administered ---i.e. the questions are presented in the screen of the computer--- in our experiment. Second, the interviewer in the wvs receives the following instruction: ``NOTE: Code but do not read out-- here and throughout the interview: 1. Don't know; 2. No answer; and 3. Not applicable''. Subjects in our experiment were allowed to skip a question, but we cannot distinguish the three options considered in the wvs protocol. We have no direct evidence on the impact that the different way of administering the survey may have had on responses. The percentage of answers in the lab was extremely high. Leaving aside 20 questionnaires entirely lost due to a software manipulation failure (with 223 questionnaires processed correctly), no question received less than 98.6 percent of answers.} 
Based on self ideological identification, our subjects tend to be more left-wing than the average population. Accordingly, they show less support than the wvs respondents to assertions such that ``we need larger income differences as incentives for individual effort'', ``people should take more responsibility to provide for themselves'' and it is good or very good to have a ``strong leader who does not have to bother with Congress and elections''. Our subjects also provided stronger support than wvs respondents to the assertion that ``government ownership of business and industry should be increased''. However, they show weaker support to the idea that ``competition is harmful'' and stronger support to the assertion that it is good ``having experts, not government, make decisions according to what they think is best for the country''.

\paragraph{How did they decide in our experiment?} In Table \ref{table:descriptive2} we present the proportion of cases in which several subgroups of subjects chose $SP$ rather than checks and balances. We obtained stronger support for $SP$ among females than males, and subjects whose parents did not have tertiary education. The frequency of $SP$ was higher among right- than left-wing self-identified participants, supporters of strong leaders and military government. It was lower among supporters of democratic government and individuals more interested in politics. 

\begin{table}[H]\centering \scriptsize 
	\begin{threeparttable}
		\caption{Descriptive statistics 1: some characteristics and beliefs of participants.}
		\vspace{-0.3cm}

     \begin{tabular}{lrcccccp{25.665em}}
 	\toprule
 	&       & \multicolumn{2}{c}{\textbf{Lab}} &       & \multicolumn{2}{c}{\textbf{WVS Uruguay}} & \multicolumn{1}{l}{\multirow{2}[4]{*}{\textbf{Definitions}}} \\
 	\cmidrule{3-4}\cmidrule{6-7}          &       & \textbf{mean} & \textbf{sd} &       & \textbf{mean} & \textbf{sd} & \multicolumn{1}{l}{} \\
 	\cmidrule{3-8}    \multicolumn{2}{l}{\textbf{CHARACTERISTICS}} &       &       &       &       &       & \multicolumn{1}{c}{} \\
 	\multicolumn{2}{l}{\textbf{Female}} & 0.63  & 0.48  &       & 0.53  & 0.50  & \multicolumn{1}{l}{= 0, if male; = 1 if female.} \\
 	\multicolumn{2}{l}{\textbf{Montevideo}} & 0.60  & 0.49  &       &       &       & \multicolumn{1}{l}{= 1, if raised in Montevideo (capital city); = 0 otherwise.} \\
 	\multicolumn{2}{l}{\textbf{Parents with tertiary education}} &       &       &       &       &       & \multicolumn{1}{r}{} \\
 	& \multicolumn{1}{l}{     Father } & 0.36  & 0.48  &       &       &       & \multicolumn{1}{l}{= 0, if no; = 1 if yes, even if incomplete.} \\
 	& \multicolumn{1}{l}{     Mother} & 0.43  & 0.50  &       &       &       & \multicolumn{1}{l}{= 0, if no; = 1 if yes, even if incomplete.} \\
 	\multicolumn{2}{l}{\textbf{Public Education}} &       &       &       &       &       & \multicolumn{1}{r}{} \\
 	& \multicolumn{1}{l}{     Primary } & 0.64  & 0.48  &       &       &       & \multicolumn{1}{l}{= 0, if no; = 1 if yes.} \\
 	& \multicolumn{1}{l}{     High school} & 0.64  & 0.48  &       &       &       & \multicolumn{1}{l}{= 0, if no; = 1 if yes.} \\
 	\multicolumn{2}{l}{\textbf{Income percentile (self perception)}} &       &       &       &       &       & \multicolumn{1}{r}{} \\
 	& \multicolumn{1}{l}{     Ten-point scale} & 5.32  & 1.81  &       & 4.50  & 1.81  & \multicolumn{1}{l}{= 1, if poorest; . . . ; = 10, if richest decile.} \\
 	& \multicolumn{1}{l}{     Two-point scale} & 0.46  & 0.50  &       & 0.26  & 0.44  & \multicolumn{1}{l}{= 0, if deciles 1 to 5; = 1, otherwise. } \\
 	\multicolumn{2}{l}{\textbf{BELIEFS}} &       &       &       &       &       & \multicolumn{1}{r}{} \\
 	\multicolumn{2}{l}{\textbf{Left- to right-wing}} &       &       &       &       &       & \multicolumn{1}{r}{} \\
 	& \multicolumn{1}{l}{     Ten-point scale} & 3.68  & 2.31  &       & 4.68  & 2.49  & \multicolumn{1}{l}{ = 0, if left; . . . ; =10 if right.} \\
 	& \multicolumn{1}{l}{     Two-point scale} & 0.19  & 0.39  &       & 0.28  & 0.45  & \multicolumn{1}{l}{= 0, if points 1 to 5 in the 10-point scale; = 1, otherwise. } \\
 	\multicolumn{2}{l}{\textbf{We need larger income differences}} &       &       &       &       &       & \multicolumn{1}{r}{} \\
 	& \multicolumn{1}{l}{     Ten-point scale} & 3.67  & 2.52  &       & 5.11  & 2.92  & = 1, if more equal is better; . . . ; = 10, if larger differences are needed. \\
 	& \multicolumn{1}{l}{     Two-point scale} & 0.22  & 0.41  &       & 0.40  & 0.49  & = 0, if points 1 to 5 in the 10-point scale; = 1, otherwise.  \\
 	\multicolumn{2}{l}{\textbf{Raise government ownership}} &       &       &       &       &       & \multicolumn{1}{r}{} \\
 	& \multicolumn{1}{l}{     Ten-point scale} & 6.43  & 2.60  &       & 5.56  & 2.62  & = 1, if private. . . ; . . . ; = 10, if government ownership should be increased. \\
 	& \multicolumn{1}{l}{     Two-point scale} & 0.65  & 0.48  &       & 0.43  & 0.50  & = 0, if points 1 to 5 in the 10-point scale; = 1, otherwise.  \\
 	\multicolumn{2}{l}{\textbf{People take more responsibility}} &       &       &       &       &       & \multicolumn{1}{r}{} \\
 	& \multicolumn{1}{l}{     Ten-point scale} & 3.79  & 2.60  &       & 5.07  & 3.04  & = 1, if government. . . ; . . . ; = 10, people should take more responsibility. \\
 	& \multicolumn{1}{l}{     Two-point scale} & 0.23  & 0.42  &       & 0.38  & 0.49  & = 0, if points 1 to 5 in the 10-point scale; = 1, otherwise.  \\
 	\multicolumn{2}{l}{\textbf{Competition is harmful}} &       &       &       &       &       & \multicolumn{1}{r}{} \\
 	& \multicolumn{1}{l}{     Ten-point scale} & 4.63  & 2.64  &       & 5.04  & 2.90  & = 1, if competition is good; . . . ; = 10, if competition is harmful. \\
 	& \multicolumn{1}{l}{     Two-point scale} & 0.31  & 0.46  &       & 0.37  & 0.48  & = 0, if points 1 to 5 in the 10-point scale; = 1, otherwise.  \\
 	\multicolumn{2}{l}{\textbf{Luck and contacts}} &       &       &       &       &       & \multicolumn{1}{r}{} \\
 	& \multicolumn{1}{l}{     Ten-point scale} & 5.54  & 2.84  &       & 5.48  & 2.89  & = 1, if hard work brings a better life; . . . ; = 10, if it's luck and connections. \\
 	& \multicolumn{1}{l}{     Two-point scale} & 0.48  & 0.50  &       & 0.44  & 0.50  & = 0, if points 1 to 5 in the 10-point scale; = 1, otherwise.  \\
 	\multicolumn{2}{l}{\textbf{Interested in politics}} & 0.90  & 0.30  &       & 0.31  & 0.46  & = 0, if not at all or not very interested; = 1, if somewhat or very interested in politics. \\
 	\multicolumn{2}{l}{\textbf{Strong leader}} & 0.24  & 0.43  &       & 0.39  & 0.49  & = 0, if very or fairly bad; = 1, if fairly or very good. \\
 	\multicolumn{2}{l}{\textbf{Experts rather than government}} & 0.56  & 0.50  &       & 0.48  & 0.50  & = 0, if very or fairly bad; = 1, if fairly or very good. \\
 	\multicolumn{2}{l}{\textbf{Military government}} & 0.04  & 0.20  &       & 0.09  & 0.29  & = 0, if very or fairly bad; = 1, if fairly or very good. \\
 	\multicolumn{2}{l}{\textbf{Democratic government}} & 0.98  & 0.13  &       & 0.95  & 0.22  & = 0, if very or fairly bad; = 1, if fairly or very good. \\
 	\bottomrule
 \end{tabular}%

		\label{table:descriptive1}
		\Fignote{Source: Own computations based on experiment and the World Values Survey.}
	\end{threeparttable}
\end{table}

\begin{table}[htbp]
	\centering
\begin{threeparttable}
	\caption{Descriptive statistics 2: frequency of $SP$.}
    \begin{tabular}{rlcrlc}
	\midrule
	\multicolumn{3}{l}{\textbf{Gender}} & \multicolumn{3}{l}{\textbf{Raise government ownership}} \\
	& Male  & 0.14  &       & 1 to 5 & 0.15 \\
	& Female & 0.18  &       & 6 to 10 & 0.16 \\
	\multicolumn{3}{l}{\textbf{Department}} & \multicolumn{3}{l}{\textbf{People take more responsibility}} \\
	& Montevideo & 0.16  &       & 1 to 5 & 0.17 \\
	& Other & 0.17  &       & 6 to 10 & 0.13 \\
	\multicolumn{3}{l}{\textbf{Father's Education Level }} & \multicolumn{3}{l}{\textbf{Competition is harmful}} \\
	& Non tertiary & 0.17  &       & 1 to 5 & 0.15 \\
	& Tertiary & 0.15  &       & 6 to 10 & 0.19 \\
	\multicolumn{3}{l}{\textbf{Mother's Education Level }} & \multicolumn{3}{l}{\textbf{Luck and contacts}} \\
	& Non tertiary & 0.19  &       & 1 to 5 & 0.17 \\
	& Tertiary & 0.13  &       & 6 to 10 & 0.16 \\
	\multicolumn{3}{l}{\textbf{Primary Education}} & \multicolumn{3}{l}{\textbf{Interest in politics}} \\
	& Private & 0.18  &       & Not at all or not very interested & 0.14 \\
	& Public & 0.15  &       & Somewhat or very interested & 0.17 \\
	\multicolumn{3}{l}{\textbf{High School Education}} & \multicolumn{3}{l}{\textbf{Strong leader}} \\
	& Private & 0.15  &       & Very or fairly bad & 0.16 \\
	& Public & 0.17  &       & Fairly or very good & 0.20 \\
	\multicolumn{3}{l}{\textbf{Income decile (self perception)}} & \multicolumn{3}{l}{\textbf{Experts rather than government}} \\
	& 1 to 5 & 0.17  &       & Very or fairly bad & 0.17 \\
	& 6 to 10 & 0.16  &       & Fairly or very good & 0.16 \\
	\multicolumn{3}{l}{\textbf{Left- to right-wing}} & \multicolumn{3}{l}{\textbf{Military government}} \\
	& 1 to 5 & 0.15  &       & Very or fairly bad & 0.15 \\
	& 6 to 10 & 0.24  &       & Fairly or very good & 0.44 \\
	\multicolumn{3}{l}{\textbf{We need larger income differences}} & \multicolumn{3}{l}{\textbf{Democratic government}} \\
	& 1 to 5 & 0.17  &       & Very or fairly bad & 0.20 \\
	& 6 to 10 & 0.16  &       & Fairly or very good & 0.16 \\
	\bottomrule
\end{tabular}%
		\label{table:descriptive2}%
		\Fignote{Source: Own computations based on experiment and the World Values Survey.}
\end{threeparttable}
\end{table}%

\section{Balance of covariates between treatment arms}\label{balance}

In tables \ref{table:balance1} and \ref{table:balance2}, we present an analysis of balance of some covariates between treatment arms. 

We compare the frequencies of female, subjects aged 19 and less, subjects having attended private primary and secondary school, and subjects whose parents have some university education (even if not necessarily finished).\footnote{We do not have finer individual data because of a concern raised by the ethics committee about the anonymity of subjects.}  

We do not estimate balance for  hypotheses 2 and 4 because these are perfectly balanced by design. Indeed, these hypotheses are tested using only within-subjects comparisons and hence the treated and control groups include exactly the same subjects. The other hypotheses are tested using between- or between- and within-subjects comparisons and hence a balance of covariates can be performed.

\begin{table}[H] 
\footnotesize
		\centering
\begin{threeparttable} 
\caption{Balance testing.}
 \label{table:balance1}
\begin{tabular}{l*{3}{c}}
\hline
 
            &     Treated&     Control&    P-values\\
\cmidrule{2-4}
& \multicolumn{3}{c}{h1}  \\
\cmidrule{2-4}  
Female   &       0.682&       0.631&       0.246\\
Aged 19 and less &       0.191&       0.207&       0.694\\
Private primary school&       0.363&       0.365&       1.000\\
Private high scool&       0.353&       0.362&       0.868\\
Father attended university&       0.389&       0.360&       0.507\\
Mother attended university&       0.465&       0.432&       0.469\\
\cmidrule{2-4}
     & \multicolumn{3}{c}{h3}  \\
\cmidrule{2-4}  
Female   &       0.508&       0.631&       0.054\\
Aged 19 and less &       0.246&       0.207&       0.446\\
Private primary school&       0.369&       0.365&       1.000\\
Private high scool&       0.385&       0.362&       0.701\\
Father attended university&       0.292&       0.360&       0.302\\
Mother attended university&       0.354&       0.432&       0.259\\
\cmidrule{2-4}
     & \multicolumn{3}{c}{h5}
 \\
\cmidrule{2-4}  
Female   &       0.682&       0.508&       0.215\\
Aged 19 and less &       0.191&       0.246&       0.694\\
Private primary school&       0.363&       0.369&       1.000\\
Private high scool&       0.353&       0.385&       0.868\\
Father attended university&       0.389&       0.292&       0.456\\
Mother attended university&       0.465&       0.354&       0.421\\
\cmidrule{2-4}
& \multicolumn{3}{c}{h6}  \\
\cmidrule{2-4}  
Female   &       0.674&       0.619&       0.647\\
Aged 19 and less &       0.239&       0.199&       0.586\\
Private primary school&       0.326&       0.375&       0.648\\
Private high scool&       0.283&       0.383&       0.287\\
Father attended university&       0.196&       0.403&       0.021\\
Mother attended university&       0.413&       0.438&       0.882\\
\hline

\end{tabular}
\Fignote{Columns 1 and 2 contain the proportions of individuals satisfying the condition described in the row name in the treated and control groups. Column 3 contains the P-values for two-tail Fisher tests for the null hypotheses that the proportions are the same in treated and control groups. }
		\end{threeparttable}
	 
\end{table}

\begin{table}[H] 
		\centering

\begin{threeparttable} 
\caption{Balance testing.}
 \label{table:balance2}
\begin{tabular}{l*{3}{c}}
\hline
 
            &     Treated&     Control&    P-values\\
\cmidrule{2-4}
     & \multicolumn{3}{c}{h7}  \\
\cmidrule{2-4}  
Female   &       0.727&       0.614&       0.283\\
Aged 19 and less &       0.152&       0.217&       0.524\\
Private primary school&       0.333&       0.370&       0.857\\
Private high scool&       0.303&       0.372&       0.588\\
Father attended university&       0.455&       0.344&       0.279\\
Mother attended university&       0.333&       0.450&       0.294\\
\cmidrule{2-4}
     & \multicolumn{3}{c}{h8}
 \\
\cmidrule{2-4}  
Female   &       0.679&       0.624&       0.698\\
Aged 19 and less &       0.143&       0.216&       0.492\\
Private primary school&       0.536&       0.340&       0.076\\
Private high scool&       0.500&       0.342&       0.167\\
Father attended university&       0.464&       0.345&       0.325\\
Mother attended university&       0.679&       0.397&       0.012\\
\cmidrule{2-4}
     & \multicolumn{3}{c}{h9}
 \\
\cmidrule{2-4}  
Female   &       0.538&       0.643&       0.417\\
Aged 19 and less &       0.346&       0.189&       0.090\\
Private primary school&       0.385&       0.362&       0.841\\
Private high scool&       0.385&       0.359&       0.840\\
Father attended university&       0.192&       0.383&       0.100\\
Mother attended university&       0.385&       0.439&       0.695\\
\hline

\end{tabular}
\Fignote{Columns 1 and 2 contain the proportions of individuals satisfying the condition described in the row name in the treated and control groups. Column 3 contains the P-values for two-tail Fisher tests for the null hypotheses that the proportions are the same in treated and control groups. }
		\end{threeparttable}
	 
\end{table}
 
Most of the covariates are balanced. We find differences significant at 5 per cent for the proportion of subjects whose father attended the university in testing hypotheses 6 and 8, and the proportion of mothers who attended the university in testing hypotheses 9. None is significant at 1 per cent. We interpret these ``findings'' as the result of chance.

Following recommendations by \textcite{Deaton2018} and, more forcefully, \textcite{Mutz2019}, we did not modify our estimation procedures based on balance testing. \footnote{\textcite{Mutz2019} conclude that ``Our central conclusion is that there is no statistical basis for advocating the Balance Test \& Adjust procedure for analyzing randomized experiments. Although balance testing is widely advocated and is believed to produce more credible estimates of experimental effects, posthoc adjustments using covariates selected on the basis of failed balance tests have no basis in statistical theory. Covariates that are chosen after an experiment is conducted should produce greater rather than lesser skepticism about the results.'' }

\section{Fisher tests}\label{Fisher}

	\begin{table}	[H]
		\centering
		\begin{threeparttable}
			\caption{Frequency of special powers with ex-ante beneficial and harmful reforms.} \label{table:frequencies}
			\centering
			\begin{tabular}{lrccr}
				\toprule
				&       & \multicolumn{2}{c}{\textbf{Reform is ex ante…}} &  \\
				\midrule
				\multicolumn{2}{l}{\textbf{Political gridlock status}} & Beneficial & Harmful &  \\
				\multicolumn{2}{l}{No gridlock} & 0,08  & 0,12  & (0,376) \\
				\multicolumn{2}{l}{Gridlock with …} &       &       &  \\
				& \multicolumn{1}{l}{reformist X and conservative L} & 0,50  & 0,33  & (0,897) \\
				&       & (0,004) & (0,066) &  \\
				& \multicolumn{1}{l}{unbiased X and conservative L} & 0,77  & 0,73  & (0,685) \\
				&       & (0,000) & (0,000) &  \\
				& \multicolumn{1}{l}{reformist X and unbiased L} & 0,20  & 0,27  &  (0,366)\\
				&       & (0,176) & (0,140) &  \\
				\multicolumn{2}{l}{Number of subjects} & 40    & 49    &  \\
				Number of observations &       & 80    & 98    &  \\
				\bottomrule
			\end{tabular}%
			
			\Fignote{Notes: Data from treatments 3 and 4, corresponding to ex ante beneficial and harmful reforms, respectively. Under the heading ``No gridlock'', we report results obtained with conservative X and L. The heading ``Gridlock with reformist X and conservative L'' is self explanatory. Under the heading ``Gridlock with unbiased X and conservative L'', we report results with a unbiased X proposing reform and a conservative L. Under the heading 
				``Gridlock with reformist X and unbiased L'', we report results with a reformist X and a unbiased L proposing the status quo policy. Below each frequency with political gridlock, we report the one-sided p-value of a Fisher test of difference in proportions in which the null is that the observed frequency is equal to the baseline presented in the no gridlock case and the alternative is that the frequency with gridlock is higher or lower than the baseline, depending on the model's prediction in each condition. In the third column, we report the one-sided  p-values of Fisher tests of differences of proportions in which the null is that the probability of $SP$ is lower than or equal to in the ex ante harmful than in the beneficial reform environment and the alternative is that it is higher in the ex ante harmful reform environment. 			 \\
				Source: Own computations based on experimental data.}
		\end{threeparttable}
	\end{table}

	\begin{table}[H]
		\centering
		\begin{threeparttable}
		\caption{Frequency of special powers with low and high rents}    \label{table:rents}		
		\begin{tabular}{lrccr}
			\toprule
			&       & \multicolumn{2}{c}{\textbf{Rents are …}} &  \\
			\midrule
			\multicolumn{2}{l}{\textbf{Political gridlock status}} & Low   & High  & \multicolumn{1}{c}{p-value} \\
			\multicolumn{2}{l}{No gridlock} & 0,08  & 0,09  & \multicolumn{1}{c}{0,681} \\
			\multicolumn{2}{l}{Gridlock with …} &       &       &  \\
			& \multicolumn{1}{l}{reformist X and conservative L} & 0,50  & 0,13  & \multicolumn{1}{c}{0,006} \\
			& \multicolumn{1}{l}{unbiased X and conservative L} & 0,77  & 0,24  & \multicolumn{1}{c}{0,000} \\
			& \multicolumn{1}{l}{reformist X and unbiased L} & 0,20  & 0,09  & \multicolumn{1}{c}{0,212} \\
			\multicolumn{2}{l}{Number of subjects} & 40    & 46    &  \\
			\bottomrule
		\end{tabular}%
		
		\Fignote{Notes: Data from treatments 3 and 5, corresponding to low and high rents, respectively. Under the heading ``No gridlock'', we report results obtained with conservative X and L. The heading ``Gridlock with reformist X and conservative L'' is self explanatory. Under the heading ``Gridlock with unbiased X and conservative L'', we report results with a unbiased X proposing reform and a conservative L. Under the heading ``Gridlock with reformist X and unbiased L'', we report results with a reformist X and a unbiased L proposing the status quo policy. In the third column we report the one-sided p-value of a Fisher test of difference in proportions in which the null is that the observed frequency is the same regardless of rents and the alternative is that the frequency is lower with high than low rents.  			 \\
		Source: Own computations based on experimental data.}
		\end{threeparttable}
	\end{table}%

\begin{table}[H]
	\centering
	\begin{threeparttable}	
		\caption{Frequency of special powers: the impact of framing}    \label{table:framing}
		\begin{tabular}{lrccrcr}
			\toprule
			&       & \multicolumn{5}{c}{\textbf{Framing is …}} \\
			\cmidrule{3-7}          &       & Neutral & Corruption & \multicolumn{1}{c}{p-value} & Political & \multicolumn{1}{c}{p-value} \\
			&       & (1)   & (2)   & \multicolumn{1}{c}{(3)} & (4)   & \multicolumn{1}{c}{(5)} \\
			\multicolumn{2}{r}{} &       &       & \multicolumn{1}{c}{H0: (1) = (2)} &       & \multicolumn{1}{c}{H0: (2) = (4)} \\
			\cmidrule{3-7}    \multicolumn{2}{l}{No gridlock} & 0,08  & 0,00  & \multicolumn{1}{c}{0,637} & 0,03  & \multicolumn{1}{c}{0,42} \\
			\multicolumn{2}{l}{Gridlock with …} &       &       &       &       &  \\
			& \multicolumn{1}{l}{reformist X and conservative L} & 0,50  & 0,48  & \multicolumn{1}{c}{1,000} & 0,48  & \multicolumn{1}{c}{1,00} \\
			& \multicolumn{1}{l}{unbiased X and conservative L} & 0,77  & 0,45  &  \multicolumn{1}{c}{0,799} & 0,55  & \multicolumn{1}{c}{0,73} \\
			& \multicolumn{1}{l}{reformist X and unbiased L} & 0,20  & 0,30  & \multicolumn{1}{c}{0,396} & 0,15  & \multicolumn{1}{c}{0,41} \\
			\multicolumn{2}{l}{Number of subjects} & 40    & 33    &       & 33    &  \\
			\bottomrule
		\end{tabular}%
		
		\Fignote{Notes: Data from treatments 3, 6 and 7, corresponding to neutral, corruption (i.e. neutral plus cost of rule 2 identified as corruption) and political framing, respectively. Under the heading ``No gridlock'', we report results obtained with conservative X and L. The heading ``Gridlock with reformist X and conservative L'' is self explanatory. Under the heading ``Gridlock with unbiased X and conservative L'', we report results with a unbiased X proposing reform and a conservative L. Under the heading ``Gridlock with reformist X and unbiased L'', we report results with a reformist X and a unbiased L proposing the status quo policy. In columns $(3)$ and $(5)$ we report the two-sided p-value of a Fisher test of difference in proportions in which the null is that the observed frequency is the same regardless of framing and the alternative is that the frequency is different.  			 \\
			Source: Own computations based on experimental data.}
		
	\end{threeparttable}
\end{table}%
		
		\begin{table}[H]
		\centering
		\begin{threeparttable}	
			\caption{Frequency of special powers: the impact of the frequency of gridlock in the priming phase}    \label{table:lowfreqpriming}
			\begin{tabular}{lrccr}
				\toprule
				&       & \multicolumn{2}{c}{\textbf{Frequency of gridlock in priming phase …}} &  \\
				\midrule
				\multicolumn{2}{l}{\textbf{Political gridlock status}} & Low   & High  & \multicolumn{1}{c}{p-value} \\
				\multicolumn{2}{l}{No gridlock} & 0,12  & 0,08  & \multicolumn{1}{c}{0,793} \\
				\multicolumn{2}{l}{Gridlock with …} &       &       &  \\
				& \multicolumn{1}{l}{reformist X and conservative L} & 0,69  & 0,50  & \multicolumn{1}{c}{0,922} \\
				& \multicolumn{1}{l}{unbiased X and conservative L} & 0,77  & 0,77  & \multicolumn{1}{c}{0,559} \\
				& \multicolumn{1}{l}{reformist X and unbiased L} & 0,31  & 0,20  & \multicolumn{1}{c}{0,843} \\
				\multicolumn{2}{l}{Number of subjects} & 26    & 40    &  \\
				\bottomrule
			\end{tabular}%
			\Fignote{Notes: Data from treatments 1 and 3, corresponding to low and high frequency of political gridlock in the priming phase, respectively. Under the heading ``No gridlock'', we report results obtained with conservative X and L. The heading ``Gridlock with reformist X and conservative L'' is self explanatory. Under the heading ``Gridlock with unbiased X and conservative L'', we report results with a unbiased X proposing reform and a conservative L. Under the heading ``Gridlock with reformist X and unbiased L'', we report results with a reformist X and a unbiased L proposing the status quo policy. In the third column we report the one-sided p-value of a Fisher test of difference in proportions in which the null is that the observed frequency of $SP$ is the same regardless of priming and the alternative is that the frequency of $SP$ is lower when political gridlock was less frequent in the priming phase.  			 \\
				Source: Own computations based on experimental data.}
			
		\end{threeparttable}
	\end{table}%

\newpage	
\section{Instructions of treatments 1 and 7 \normalfont{(translated from the original version in Spanish).}}\label{Instr}

\includegraphics[scale=0.92]{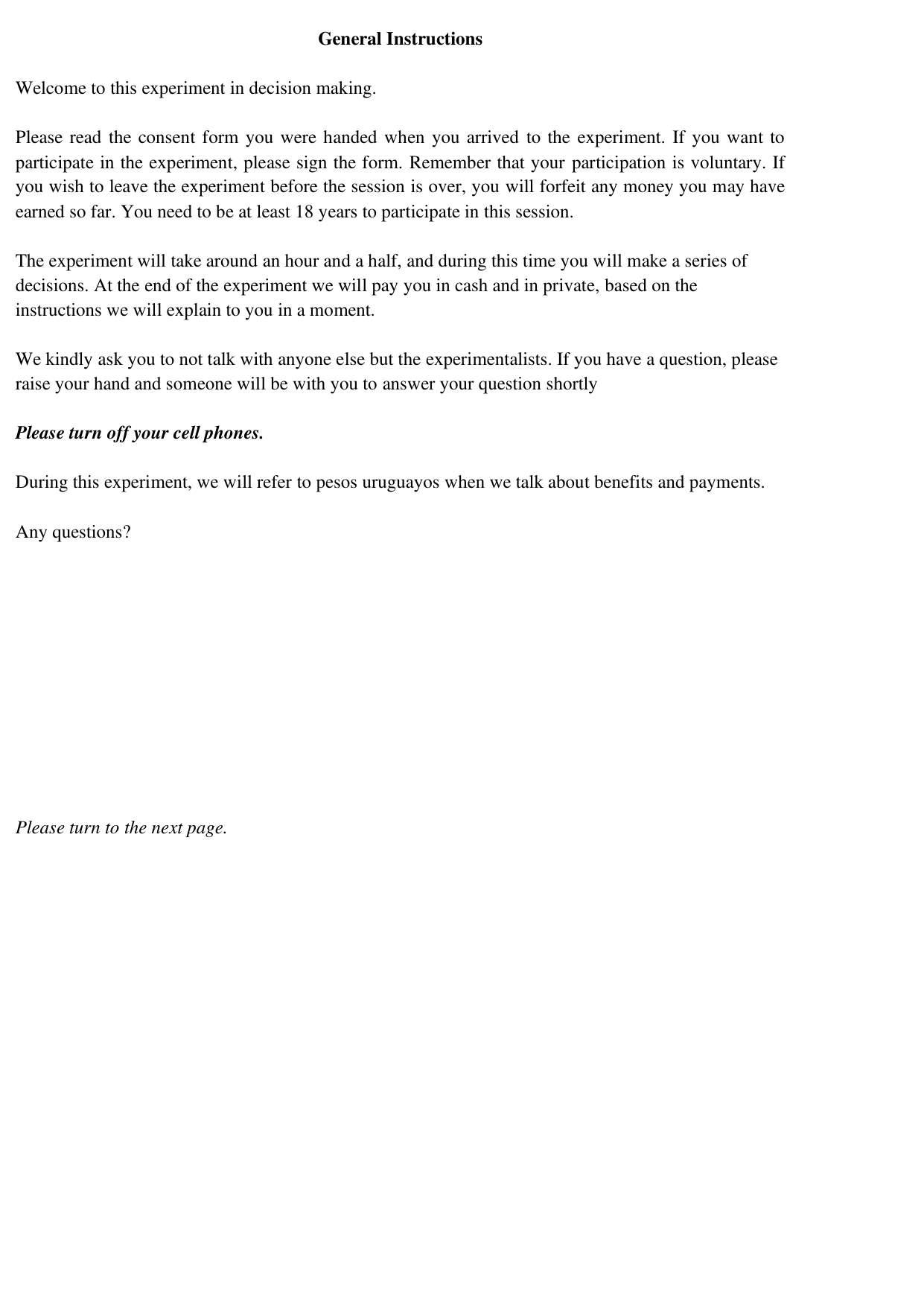}

\includepdf[pages=-]{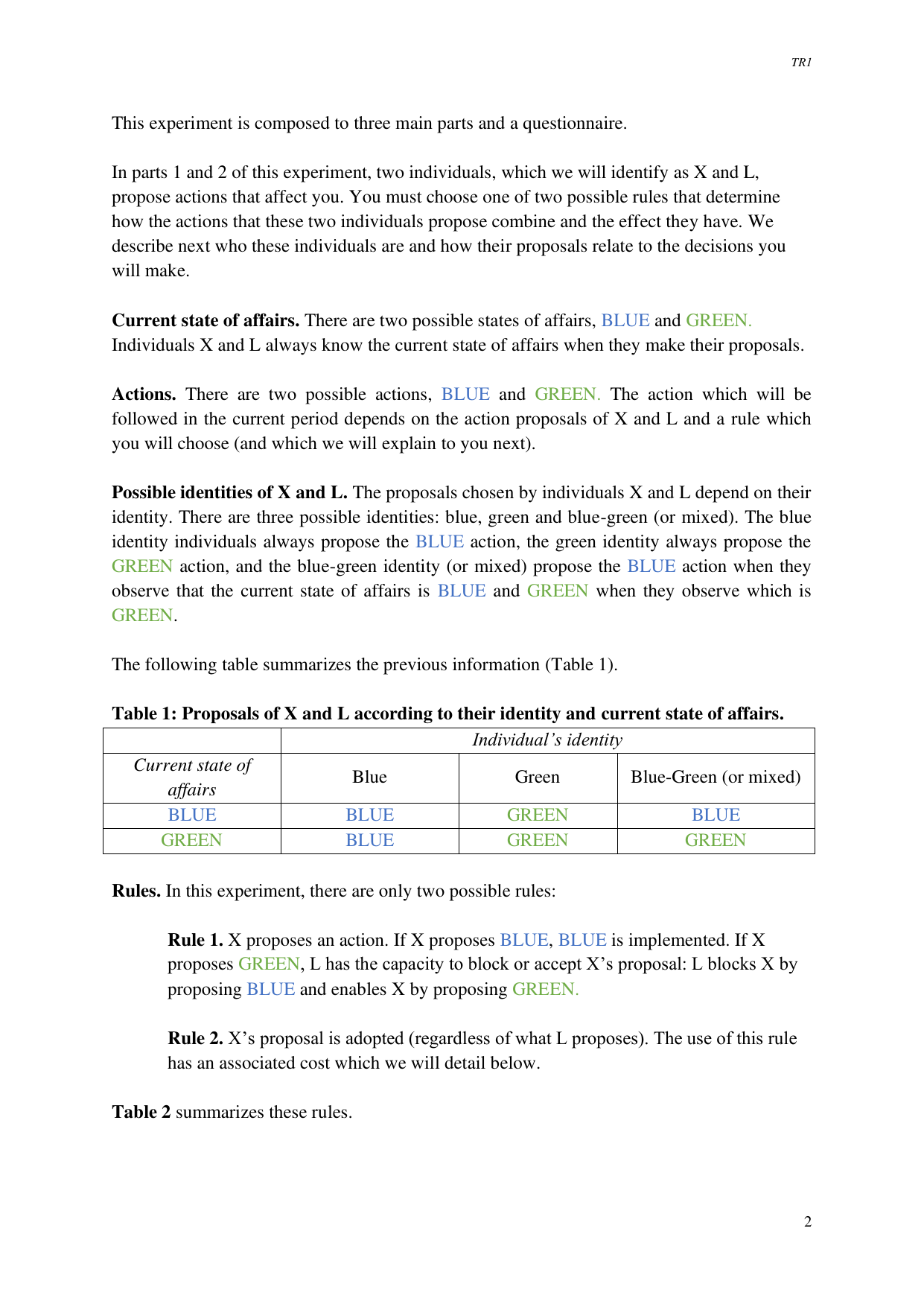}

\includepdf[pages=-]{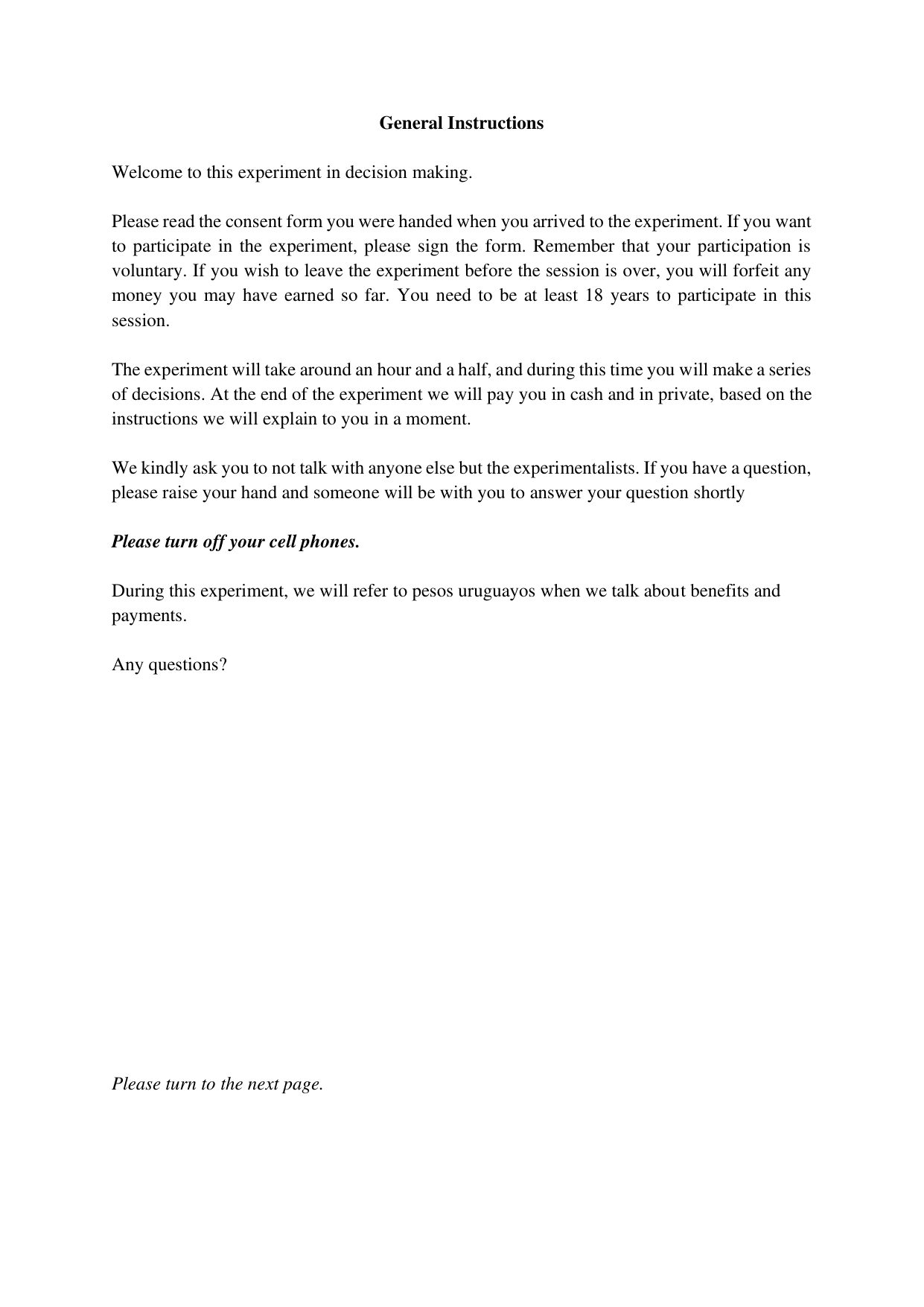}

\section{Risk aversion measurement.}\label{Risk}	
\includegraphics[scale=0.65]{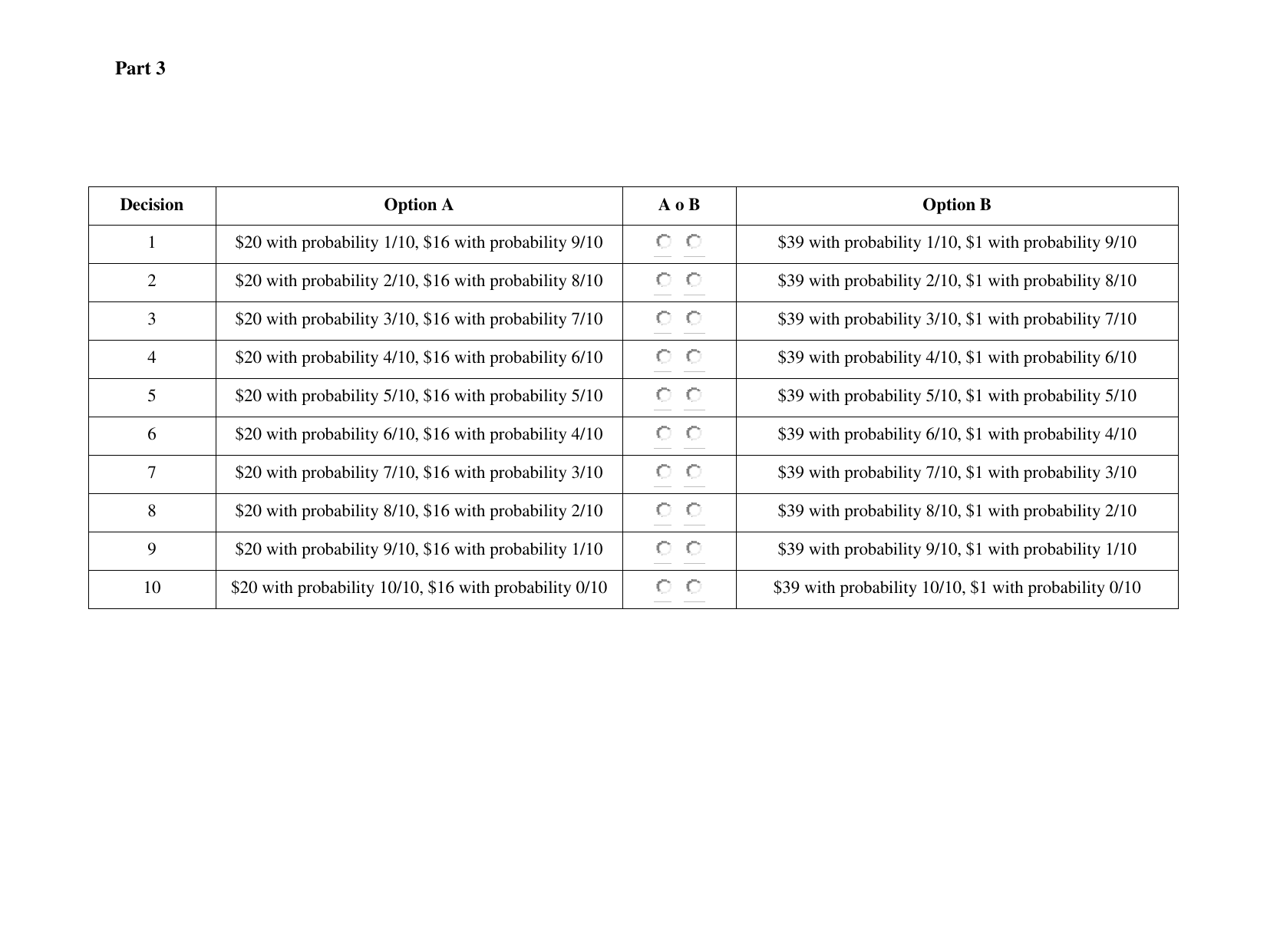}
\newpage

\section{Post-experimental questionnaire}\label{Ques}	
\includepdf[pages=-]{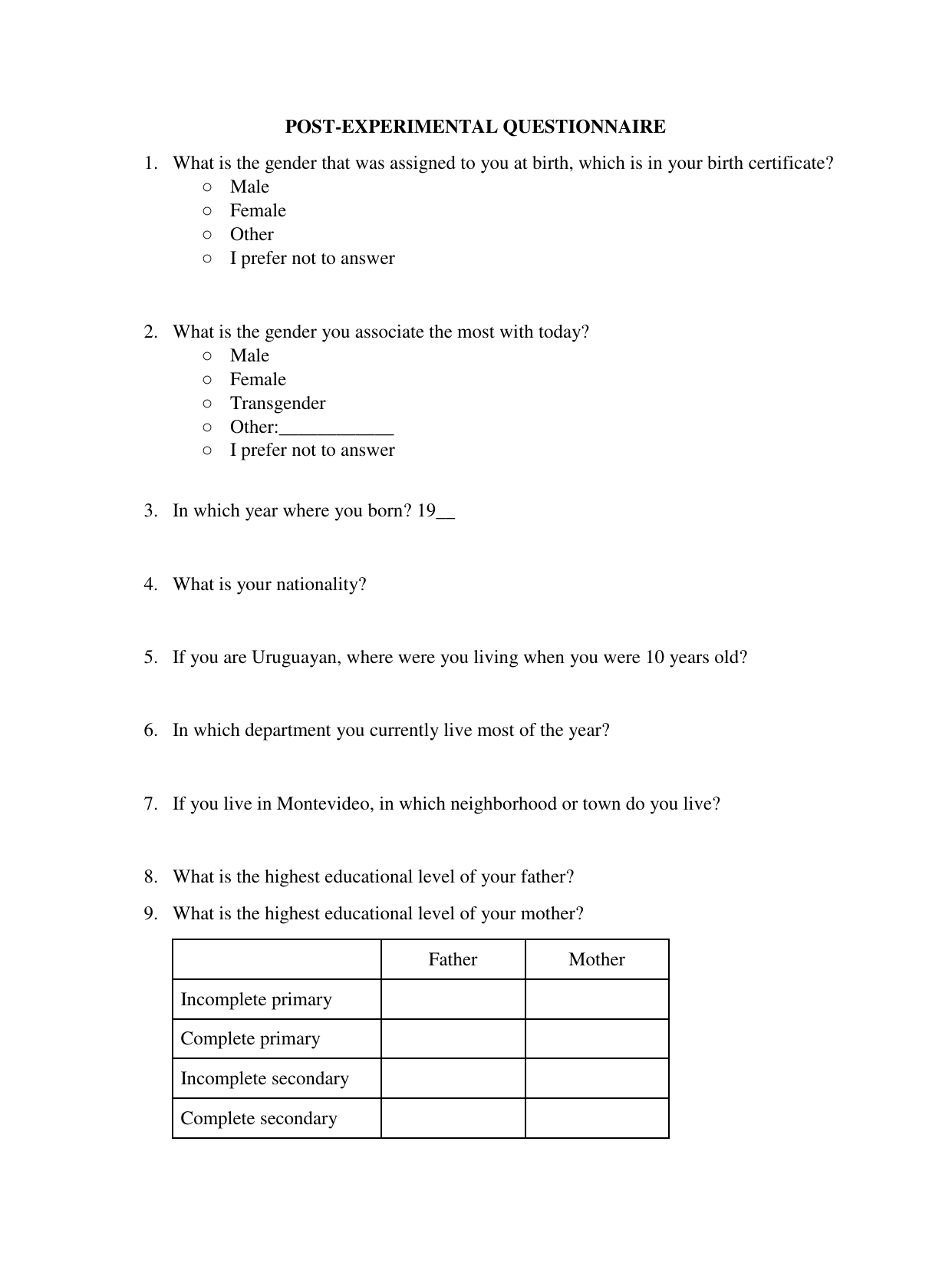}

\end{document}